\def\showauthornotes{0}
\def\showtableofcontents{0}
\def\showkeys{0}
\def\showdraftbox{1}
\def\showcolorlinks{1}
\def\usemicrotype{0}
\def\showfixme{1}
\newtheorem{theorem}{Theorem}[section]
\newtheorem*{theorem*}{Theorem}
\newtheorem{subclaim}{Claim}[theorem]
\newtheorem{proposition}[theorem]{Proposition}
\newtheorem*{proposition*}{Proposition}
\newtheorem{lemma}[theorem]{Lemma}
\newtheorem*{lemma*}{Lemma}
\newtheorem*{conjecture*}{Conjecture}
\newtheorem{fact}[theorem]{Fact}
\newtheorem*{fact*}{Fact}
\newtheorem*{hypothesis*}{Hypothesis}
\theoremstyle{definition}
\newtheorem{definition}[theorem]{Definition}
\theoremstyle{remark}
\newtheorem*{claim*}{Claim}
\newtheorem{remark}[theorem]{Remark}
\newtheorem*{remark*}{Remark}
\newtheorem*{observation*}{Observation}
\newcommand{\savehyperref}[2]{\texorpdfstring{\hyperref[#1]{#2}}{#2}}
\newcommand{\Sref}[1]{\hyperref[#1]{\S\ref*{#1}}}
\newcommand{\Authornote}[2]{{\sffamily\small\color{red}{[#1: #2]}}}
\newcommand{\Authornotecolored}[3]{{\sffamily\small\color{#1}{[#2: #3]}}}
\newcommand{\Authorcomment}[2]{{\sffamily\small\color{gray}{[#1: #2]}}}
\newcommand{\Authorstartcomment}[1]{\sffamily\small\color{gray}[#1: }
\newcommand{\Authorfnote}[2]{\footnote{\color{red}{#1: #2}}}
\newcommand{\Authorfixme}[1]{\Authornote{#1}{\textbf{??}}}
\newcommand{\Authormarginmark}[1]{\marginpar{\textcolor{red}{\fbox{\Large #1:!}}}}
\newcommand{\Authornote}[2]{}
\newcommand{\Authornotecolored}[3]{}
\newcommand{\Authorcomment}[2]{}
\newcommand{\Authorstartcomment}[1]{}
\newcommand{\Authorfnote}[2]{}
\newcommand{\Authorfixme}[1]{}
\newcommand{\Authormarginmark}[1]{}
\newcommand{\Bnote}{\Authornote{B}}
\newenvironment{mybox}
{\center \noindent\begin{boxedminipage}{1.0\linewidth}}
{\end{boxedminipage}
\noindent
}
\newcommand{\abs}[1]{\lvert#1\rvert}
\newcommand{\Abs}[1]{\left\lvert#1\right\rvert}
\newcommand{\card}[1]{\lvert#1\rvert}
\newcommand{\set}[1]{\{#1\}}
\newcommand{\norm}[1]{\lVert#1\rVert}
\newcommand{\Esymb}{\mathbb{E}}
\newcommand{\Psymb}{\mathbb{P}}
\newcommand{\Vsymb}{\mathbb{V}}
\DeclareMathOperator*{\E}{\Esymb}
\DeclareMathOperator*{\Var}{\Vsymb}
\DeclareMathOperator*{\ProbOp}{\Psymb}
\renewcommand{\Pr}{\ProbOp}
\newcommand{\suchthat}{\;\middle\vert\;}
\newcommand{\textparen}[1]{\text{(#1)}}
\newcommand{\because}[1]{\textparen{because #1}}
\renewcommand{\because}[1]{\textparen{because #1}}
\newcommand{\by}[1]{\textparen{by #1}}
\newcommand{\sbits}{\{\pm1\}}
\newcommand{\defeq}{\stackrel{\mathrm{def}}=}
\newcommand{\mper}{\,.}
\newcommand{\mcom}{\,,}
\newcommand\bdot\bullet
\newcommand{\Ind}{\mathbb I}
\newcommand{\Ind}{\mathds 1}
\DeclareMathOperator{\Inf}{Inf}
\DeclareMathOperator{\poly}{poly}
\DeclareMathOperator{\sign}{sign}
\newcommand{\R}{\mathbb R}
\newcommand{\cP}{\mathcal P}
\newcommand{\cU}{\mathcal U}
\let\epsilon=\varepsilon
\numberwithin{equation}{section}
\newcommand\MYcurrentlabel{xxx}
\newcommand{\MYstore}[2]{%
  \global\expandafter \def \csname MYMEMORY #1 \endcsname{#2}%
}
\newcommand{\MYload}[1]{%
  \csname MYMEMORY #1 \endcsname%
}
\newcommand{\MYnewlabel}[1]{%
  \renewcommand\MYcurrentlabel{#1}%
  \MYoldlabel{#1}%
}
\newcommand{\MYdummylabel}[1]{}
\newcommand{\torestate}[1]{%
  \let\MYoldlabel\label%
  \let\label\MYnewlabel%
  #1%
  \MYstore{\MYcurrentlabel}{#1}%
  \let\label\MYoldlabel%
}
\newcommand{\restatetheorem}[1]{%
  \let\MYoldlabel\label
  \let\label\MYdummylabel
  \begin{theorem*}[Restatement of \prettyref{#1}]
    \MYload{#1}
  \end{theorem*}
  \let\label\MYoldlabel
}
\newcommand{\restatelemma}[1]{%
  \let\MYoldlabel\label
  \let\label\MYdummylabel
  \begin{lemma*}[Restatement of \prettyref{#1}]
    \MYload{#1}
  \end{lemma*}
  \let\label\MYoldlabel
}
\newcommand{\restateprop}[1]{%
  \let\MYoldlabel\label
  \let\label\MYdummylabel
  \begin{proposition*}[Restatement of \prettyref{#1}]
    \MYload{#1}
  \end{proposition*}
  \let\label\MYoldlabel
}
\newcommand{\restatefact}[1]{%
  \let\MYoldlabel\label
  \let\label\MYdummylabel
  \begin{fact*}[Restatement of \prettyref{#1}]
    \MYload{#1}
  \end{fact*}
  \let\label\MYoldlabel
}
\newcommand{\restate}[1]{%
  \let\MYoldlabel\label
  \let\label\MYdummylabel
  \MYload{#1}
  \let\label\MYoldlabel
}
\newcommand{\addreferencesection}{
  \phantomsection
  \addcontentsline{toc}{section}{References}
}
\newcommand{\sse}{\subseteq}
\newcommand{\eps}{\epsilon}
\let\origparagraph\paragraph
\renewcommand{\paragraph}[1]{\origparagraph{#1.}}
\newcommand{\AVnote}{\Authornote{AV}} %
\newcommand{\Onote}{\Authornote{O}} %
\newcommand{\ROnote}{\Authornote{R}} %
\let\pref=\prettyref
\renewcommand{\Vsymb}{\mathrm{Var}}
\newcommand{\stddev}{\mathrm{stddev}}
\renewcommand*{\Ind}{\bm 1}
\newcommand*{\pmo}{\{\pm 1\}}
\let\pref=\prettyref
\renewcommand{\Vsymb}{\mathrm{Var}}
\renewcommand{\Ind}{\bm 1}
\newcommand*{\inst}{\Im}
\newcommand*{\maj}{\mathsf{bmaj}}
\newcommand{\clf}{\mathfrak{P}}
\newcommand{\Infl}{\text{Inf}}
\newcommand{\klin}[1]{Max-$#1$XOR\xspace}
\newcommand{\degree}{D}
\newcommand{\grest}{g_{y}}
\newcommand{\gresth}{\widehat{g}_{y}}
\let\originalleft\left
\let\originalright\right
\renewcommand{\left}{\mathopen{}\mathclose\bgroup\originalleft}
\renewcommand{\right}{\aftergroup\egroup\originalright}
\newcommand{\wh}[1]{\widehat{#1}}
\newcommand{\wt}[1]{\widetilde{#1}}
\newcommand{\bx}{\boldsymbol{x}}
\renewcommand{\by}{\boldsymbol{y}}
\newcommand{\bz}{\boldsymbol{z}}
\newcommand{\bA}{\boldsymbol{A}}
\newcommand{\bF}{\boldsymbol{F}}
\newcommand{\bG}{\boldsymbol{G}}
\newcommand{\bN}{\boldsymbol{N}}
\newcommand{\bQ}{\boldsymbol{Q}}
\newcommand{\littlesum}{\mathop{{\textstyle \sum}}}
\newcommand{\sgn}{\mathrm{sgn}}
\newcommand{\ol}[1]{\overline{#1}} 
\title{Beating the random assignment on \\ constraint satisfaction problems of bounded degree}
\author{%
Boaz Barak\thanks{Microsoft Research New England.}
\and
Ankur Moitra\thanks{MIT Mathematics Department.}
\and
Ryan O'Donnell\thanks{Department of Computer Science, Carnegie Mellon. }
\and
Prasad Raghavendra\thanks{U.C.Berkeley, Department of Electrical Engineering \& Computer Sciences.}
\and
Oded Regev\thanks{Courant Institute of Mathematical Sciences, New York University.}
\and
David Steurer\thanks{Cornell University.}
\and
Luca Trevisan\footnotemark[4]
\and
Aravindan Vijayaraghavan\thanks{Courant Institute of Mathematical Sciences, New York University.}
\and
David Witmer\footnotemark[3]
\and
John Wright\footnotemark[3]
}
\begin{document}

\maketitle
%\draftbox
\thispagestyle{empty}

\begin{abstract}
We show that for any odd $k$ and any instance $\inst$ of the \klin{k} constraint satisfaction problem, 
there is an efficient algorithm that finds an assignment satisfying at least a $\tfrac{1}{2} + \Omega(1/\sqrt{\degree})$ fraction of $\inst$'s constraints, where $\degree$ is a bound on the number of constraints that each variable occurs in.
This improves both qualitatively and quantitatively on the recent work of Farhi, Goldstone, and Gutmann (2014), which gave a \emph{quantum} algorithm to find an assignment satisfying a $\tfrac{1}{2} + \Omega(\degree^{-3/4})$ fraction of the equations.

For arbitrary constraint satisfaction problems, we give a similar result for ``triangle-free'' instances; i.e., an efficient algorithm that finds an assignment satisfying at least a $\mu + \Omega(1/\sqrt{\degree})$ fraction of constraints, where $\mu$ is the fraction that would be satisfied by a uniformly random assignment.
 \end{abstract}

%%% Local Variables:
%%% mode: latex
%%% TeX-master: "../lowdegcsp"
%%% End:

\clearpage

% tableofcontents added for better navigability of the document
\ifnum\showtableofcontents=1
{
\tableofcontents
\thispagestyle{empty}
 }
\fi

\clearpage

\setcounter{page}{1}

%%% Local Variables:
%%% mode: latex
%%% TeX-master: t
%%

\section{Introduction}

An instance of a Boolean constraint satisfaction problem (CSP) over $n$ variables $x_1, \dots, x_n$ is a collection of \emph{constraints}, each of which is some predicate~$P$ applied to a constant number of the variables. The computational task is to find an assignment to the variables that maximizes the number of satisfied predicates.  In general the constraint predicates do not need to be of the same ``form''; however, it is common to study CSPs where this is the case.  Typical examples include: Max-$k$SAT, where each predicate is the OR of $k$ variables or their negations; \klin{k}, where each predicate is the XOR of exactly $k$ variables or their negations; and Max-Cut, the special case of \klin{2} in which each constraint is of the form $x_i \neq x_j$. The case of \klin{k} is particularly mathematically natural, as it is equivalent to maximizing a  homogenous degree-$k$ multilinear polynomial over~$\sbits^n$.

Given a CSP instance, it is easy to compute the expected fraction $\mu$ of constraints satisfied by a uniformly random assignment; e.g., in the case of \klin{k} we always have $\mu = \frac12$.  Thus the question of algorithmic interest is to find an assignment that satisfies noticeably more than a~$\mu$ fraction of constraints.  Of course, sometimes this is simply not possible; e.g., for Max-Cut on the complete $n$-variable graph, at most a $\frac12 + O(1/n)$ fraction of constraints can be satisfied.%
\footnote{Another trivial example is the \klin{2} instance with the two constraints $x=y$ and $x\neq y$. For this reason we always assume that our \klin{k} instances
do not contain a constraint and its negation.}
However, even when all or almost all constraints can be satisfied, it may still be algorithmically difficult to beat~$\mu$.  For example, H{\aa}stad~\cite{MR2144931} famously proved that for every $\eps > 0$, given a \klin{3} instance in which a $1-\eps$ fraction of constraints can be satisfied,
it is NP-hard to find an assignment satisfying a $\frac12 + \eps$ fraction of the constraints.  H{\aa}stad showed similar ``approximation resistance'' results for Max-$3$Sat and several other kinds of CSPs.

One possible reaction to these results is to consider subconstant~$\eps$.  For example, H{\aa}stad and Venkatesh~\cite{HV04} showed that for every \klin{k} instance with $m$~constraints,  one can efficiently find an assignment satisfying at least a $\frac12 + \Omega(1/\sqrt{m})$ fraction of them.\footnote{In~\cite{HV04} this is stated as an approximation-ratio guarantee: if the optimum fraction is $\frac12 + \eps$ then $\frac12 + \Omega(\eps/\sqrt{m})$ is guaranteed. However inspecting their proof yields the absolute statement we have made.} (Here, and elsewhere in this introduction, the $\Omega(\cdot)$ hides a dependence on~$k$, typically exponential.)
Relatedly, Khot and Naor~\cite{KN08} give an efficient algorithm for \klin{3}
that satisfies a $\frac12 + \Omega(\eps \sqrt{(\log n)/n})$ fraction of constraints whenever the optimum fraction is $\frac12 + \eps$.

Another reaction to approximation resistance is to consider restricted instances. One commonly studied restriction is to assume that each variable's ``degree'' --- i.e., the number of constraints in which it occurs --- is bounded by some~$\degree$.  H{\aa}stad~\cite{MR1761191} showed that such instances are never approximation resistant.
More precisely, he showed that
for, say, \klin{k},
one can always efficiently find an assignment satisfying at least a $\mu + \Omega(1/\degree)$ fraction of constraints.\footnote{The previous footnote applies also to this result.}
Note that this advantage of $\Omega(1/\degree)$ cannot in general be improved, as the case of Max-Cut on the complete graph shows.

One may also consider further structural restrictions on instances.  One such restriction is that the underlying constraint hypergraph be \emph{triangle-free} (see \pref{sec:preliminaries} for a precise definition).  For example, Shearer~\cite{shearer} showed that for triangle-free graphs there is an efficient algorithm for finding a cut of size at least
$
    \frac{m}{2} + \Omega(1) \cdot \sum_i \sqrt{\mathrm{deg}(i)},
$
where $\deg(i)$ is the degree of the $i$th vertex.  As $\sum_i \sqrt{\deg(i)} \geq \sum_i \frac{\deg(i)}{\sqrt{\degree}} = \frac{2m}{\sqrt{\degree}}$ in $m$-edge degree-$\degree$ bounded graphs, this shows that for \emph{triangle-free} Max-Cut one can efficiently satisfy at least a $\frac12 + \Omega(1/\sqrt{\degree})$ fraction of constraints.  Related results have also been shown for degree-bounded instances of Maximum Acyclic Subgraph~\cite{BergerShor}, Min-Bisection ~\cite{alon} and Ordering $k$-CSPs~\cite{GuruswamiZhou, Makarychev13}. 
%(See~\cite{alon} for a related result on Min-Bisection in degree-bounded graphs.)

\subsection{Recent developments and our work}
In a recent surprising development, Farhi, Goldstone, and Gutmann~\cite{FGG14} gave an efficient \emph{quantum} algorithm that, for \klin{3} instances with degree bound~$\degree$, finds an assignment satisfying a $\frac12 + \Omega(\degree^{-3/4})$ fraction of the constraints.
In addition, Farhi et al.\ show that if the \klin{3} instance is ``triangle-free'' then an efficient quantum algorithm can satisfy a  $\frac 12 + \Omega(1/\sqrt{\degree})$ fraction of the constraints.

Farhi et al.'s result was perhaps the first example of a quantum algorithm providing a better CSP approximation guarantee than that of the best known classical algorithm (namely H{\aa}stad's~\cite{MR1761191}, for \klin{3}).  As such it attracted quite some attention.\footnote{As evidenced by the long list of authors on this paper; see also \url{http://www.scottaaronson.com/blog/?p=2155}.} In this paper we show that classical algorithms can match, and in fact outperform, Farhi et al.'s quantum algorithm.

\paragraph{First result: \klin{k}} We will present two results.  The first result is about instances of \klin{k}.

\begin{theorem} \label{thm:kxor}
There is a constant $c= \exp(-O(k))$ and a randomized algorithm running in time $\poly(m,n, \exp(k))$ that, given an instance $\inst$ of \klin{k} with $m$ constraints and degree at most $\degree$, finds with high probability an assignment $x\in \sbits^n$ such that
\begin{equation}\label{eq:firstpartmainthm}
\Abs{ \text{val}_{\inst}(x) - \frac{1}{2} } \geq \frac{c}{\sqrt{\degree}} \; .
\end{equation}
Here $\text{val}_{\inst}(x)$ denotes the fraction of constraints satisfied by $x$.
In particular, for odd $k$, by trying the assignment and its negation,
the algorithm can output an $x$ satisfying
\begin{equation}\label{eq:secondpartmainthm}
\text{val}_{\inst}(x) \geq \frac{1}{2} + \frac{c}{\sqrt{\degree}} \; .
\end{equation}
\end{theorem}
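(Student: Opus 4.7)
The plan is to first recast the problem as polynomial maximization. Define the \emph{advantage polynomial} $F \colon \sbits^n \to \R$ by $F(x) = \sum_C \xi_C\, \chi_C(x)$, where $\chi_C(x) = \prod_{v \in C} x_v$ and $\xi_C \in \sbits$ is chosen so that constraint $C$ reads $\xi_C \chi_C(x) = 1$. Counting satisfied minus unsatisfied constraints yields $\text{val}_\inst(x) = \tfrac{1}{2} + F(x)/(2m)$, so \pref{eq:firstpartmainthm} is equivalent to producing $x \in \sbits^n$ with $|F(x)| \geq \Omega(m/\sqrt{\degree})$. For odd $k$ the identity $F(-x) = -F(x)$ means that whenever $|F(x)|$ is large, one of $x$ or $-x$ also satisfies \pref{eq:secondpartmainthm}, which takes care of the ``in particular'' clause.

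The algorithm I would try is a randomized \emph{local-advice} rounding. Draw $y \in \sbits^n$ uniformly at random, compute for each variable $v$ the local gradient $X_v(y) := \partial_v F(y) = \sum_{C \ni v} \xi_C\, \chi_{C \setminus \{v\}}(y)$, which satisfies $\E_y X_v(y) = 0$ and $\E_y X_v(y)^2 = \deg(v) \leq \degree$, and output $x$ defined by $x_v = \sign(X_v(y))$ (or a soft variant scaled by $X_v(y)/\sqrt{\degree}$). The intuition is constraint-by-constraint: for each $C$ and each $v \in C$, split $X_v(y) = \xi_C\, \chi_{C \setminus v}(y) + B_{v,C}(y)$, so that the first summand is the $\pm 1$ contribution of $C$ itself while $B_{v,C}$ is a Rademacher-type sum of variance at most $\degree - 1$. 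A Berry--Esseen-style estimate should show that $\sign(X_v(y))$ is biased by $\Theta(1/\sqrt{\deg(v)})$ toward $\xi_C\, \chi_{C \setminus v}(y)$; multiplying these biases across the $k$ variables of $C$ (assuming adequate independence) and summing over all $m$ constraints delivers $\E_y F(x(y)) \geq c\cdot m/\sqrt{\degree}$ with $c = \exp(-O(k))$.

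The hard part will be controlling the cross-correlations between the noise terms $B_{v,C}$ for different $v \in C$: these need not be independent when the constraint hypergraph contains the hypergraph analogue of a triangle (exactly the case excluded in the triangle-free result mentioned in the abstract). To overcome this I would combine two ingredients: (i) hypercontractivity, to absorb the most correlated configurations into the $\exp(-O(k))$ constant by treating them as a low-probability event; and (ii) a random restriction step that removes a small fraction of constraints so that, on a typical surviving one, the relevant noise terms become essentially independent. An attractive alternative route avoids local rounding entirely and is moment-based: establish $\E F^{2q} \geq \Omega_k\bigl((m^2/\degree)^q\bigr)$ for a suitable $q = \Theta(m/\degree)$ by enumerating $2q$-tuples of constraints whose symmetric difference is empty in a degree-$\degree$ hypergraph, and conclude $\max_x |F(x)| \geq (\E F^{2q})^{1/(2q)} \geq \Omega(m/\sqrt{\degree})$; this combinatorial $2q$-tuple count, which scales like the number of closed walks in the constraint graph, is the principal technical crux. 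Either way, the randomized guarantee $\E_y F(x(y)) \geq \Omega(m/\sqrt{\degree})$ turns into a deterministic $\poly(m,n,\exp(k))$ algorithm via a standard conditional-expectations derandomization, since the analysis can be arranged to depend only on bounded-wise independence of $y$.
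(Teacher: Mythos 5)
Your central algorithmic idea---draw $y$ uniformly, then set every $x_v = \sgn(\partial_v F(y))$---is genuinely different from the paper's, and it has a serious gap at the step you flag as ``multiplying these biases.'' If each $x_v$ is biased by $\Theta(1/\sqrt{\degree})$ toward $\xi_C\,\chi_{C\setminus v}(y)$ and you multiply $k$ such biases, the naive product gives an advantage of order $\degree^{-k/2}$ per constraint, not $\degree^{-1/2}$; your claimed conclusion $\E_y F(x(y)) \geq c\,m/\sqrt{\degree}$ does not follow from the heuristic you state. The deeper structural issue is not that the noise terms $B_{v,C}$ are correlated across $v$ (that is secondary, and triangle-freeness only partially helps); it is that for every constraint $C$ you are asking the \emph{product} of $k$ signs $\prod_{v\in C}\sgn(X_v(y))$ to be biased, and all $k$ of these signs are functions of the same $y$. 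That joint quantity is not controlled by the marginal biases of the individual signs, and you do not have a mechanism to show the joint bias is $\Theta(\degree^{-1/2})$ rather than $\Theta(\degree^{-k/2})$ or something uncontrolled.

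The paper avoids this entirely by arranging that only \emph{one} sign per constraint ever appears. For $k=3$ it decouples: it passes to $\wt{\clf}(y,z)=\sum_i y_i G_i(z)$ with two independent variable copies, chooses $z$ uniformly at random, and sets $y_i=\sgn(G_i(z))$. Each monomial of $\wt{\clf}$ contains exactly one $y$-variable, so $\wt{\clf}(y,z)=\sum_i|G_i(z)|$, and a single anticoncentration bound ($\E|G_i(\bz)|\gtrsim\sqrt{\Inf_i}$, \pref{fact:anticonc}) finishes it; a separate randomized interpolation (with no sign functions at all) transfers the value back to $\clf$. For general odd $k$ the paper proves a constructive version of DFKO (\pref{thm:constructive}): pick a random restriction set $U$ (each coordinate with probability $2^{-s}$), assign $[n]\setminus U$ uniformly at random, and then \emph{only} set the $U$-coordinates by the sign of a first-order Fourier coefficient, followed by a Chebyshev-extremum noise step. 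Again each ``active'' monomial has exactly one coordinate in $U$, so only one sign is involved. Your phrase ``a random restriction step that removes a small fraction of constraints so that \dots noise terms become essentially independent'' gestures in this direction but misidentifies the purpose: the restriction is not about pruning constraints or decorrelating noise; it is about ensuring each surviving constraint has a unique greedy variable.

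The moment-based alternative also does not yield the theorem as stated. The argument $\max_x|F(x)|\geq(\E F^{2q})^{1/(2q)}$ with $q=\Theta(m/\degree)$ is purely existential (and essentially what the paper observes can be read off of DFKO Theorem 3); it does not produce a $\poly(m,n,\exp(k))$ algorithm. Conditional-expectations derandomization of it would require evaluating $\E[F^{2q}]$ under partial assignments, but $F^{2q}$ has super-polynomially many monomials when $q$ is polynomial in $m$, so that route is not efficient. To repair your proposal you would need to either import the decoupling trick so that only one sign per constraint appears, or adopt a random-restriction-plus-single-greedy-coordinate scheme as in \pref{thm:constructive}.
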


In \pref{sec:3xor} we give a simple, self-contained proof of \pref{thm:kxor} in the special case of \klin{3}.  For higher~$k$ we obtain it from a more general result (\pref{thm:constructive}) that gives a constructive version of a theorem of Dinur, Friedgut, Kindler and O'Donnell~\cite{Dinuretal}.  This result shows how to attain a significant deviation from the random assignment value for multivariate low-degree polynomials with low influences. See \pref{sec:proof}.

We note that the deviation $\Omega(1/\sqrt{\degree})$ in~\eqref{eq:firstpartmainthm} is optimal.
To see why, consider any $\degree$-regular graph on $n$ vertices, and construct a \klin{2} instance~$\inst$ as follows. For every edge $(i,j)$ in the graph we randomly and independently include either the constraint $x_i = x_j$ or $x_i \neq x_j$.
For every fixed $x$, the quantity $\text{val}_\inst(x)$ has distribution $\frac1m \text{Binomial}(m, \frac12)$, where $m = \frac{n\degree}{2}$.  Hence a Chernoff-and-union-bound argument shows that with high probability all $2^n$ assignments will have $|\text{val}_\inst(x) - \frac12| \leq O(\sqrt{n/m}) = O(1/\sqrt{\degree})$.
This can easily be extended to \klin{k} for~$k > 2$.

\paragraph{General CSPs}
As noted earlier, the case of Max-Cut on the complete graph shows that for general CSPs,
and in particular for \klin{2},
we cannot guarantee a positive advantage of $\Omega(1/\sqrt{\degree})$
as in~\eqref{eq:secondpartmainthm}.
In fact, a positive advantage of $\Omega(1/\degree)$ is the best possible, showing that the guarantee of H{\aa}stad~\cite{MR1761191} is tight in general.

A similar example can be shown for Max-$2$SAT: consider an instance with $D^2$ variables and imagine them placed on a $D\times D$ grid. For any two variables in the same row add the constraint $x \vee y$ and for any two variables in the same column add the constraint $\bar x \vee \bar y$. Then each variable participates in $O(D)$ clauses, and it can be verified that the best assignment satisfies $3/4+O(1/D)$ fraction of the clauses.
We do not know if the same holds for Max-$3$SAT and we leave that as an open question.

\newcommand{\NAE}{\textsc{NAE}}
\newcommand{\AAE}{\textsc{AE}}
Sometimes no advantage over random is possible. For instance, consider the following instance with 8 clauses on 6 variables, in which
\emph{any} assignment satisfies exactly $1/2$ of the clauses:
\begin{align*}
   \{ &\NAE(x_1,x_2,x_3) , \\
	   &\AAE(y_1,x_2,x_3) , \AAE(x_1,y_2,x_3) , \AAE(x_1,x_2,y_3) , \\
	&\NAE(x_1,y_2,y_3) , \NAE(y_1,x_2,y_3) , \NAE(y_1,y_2,x_3) ,  \\
	&\AAE(y_1,y_2,y_3)
	\}\; ,
\end{align*}
where $\NAE$ denotes the ``not all equal'' constraint, and $\AAE$ is the ``all equal'' constraint.

\paragraph{Second result: triangle-free instances of general CSPs}
Despite the above examples, our second result shows that it is possible to recover the optimal advantage of $1/\sqrt{\degree}$ for
triangle-free instances of \emph{any CSP}:

\begin{theorem} \label{thm:maingencsp}
  There is a constant $c= \exp(-O(k))$ and a randomized algorithm running in time $\poly(m,n,\exp(k))$ time that, given a triangle-free, degree-$\degree$ CSP instance~$\inst$  with $m$ arbitrary constraints, each of arity between~$2$ and~$k$, finds with high probability an assignment $x \in \sbits^n$ such that
  \[
    \text{val}_\inst(x) \geq  \mu + \frac{c}{\sqrt{\degree}}.
  \]
  Here $\mu$ is the fraction of constraints in~$\inst$ that would be satisfied in expectation by a random assignment.
\end{theorem}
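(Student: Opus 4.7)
The plan is to reduce \pref{thm:maingencsp} to the same constructive deviation theorem that underlies \pref{thm:kxor}, by Fourier-encoding the CSP advantage as a bounded-degree multilinear polynomial. Concretely, write
\[
  \text{val}_\inst(x) - \mu \;=\; \frac{1}{m}\, g(x), \qquad g(x) \;:=\; \sum_C \bigl(P_C(x) - p_C\bigr),
\]
where $p_C := \E_{x}[P_C(x)]$ and the sum is over the constraints $C$ of $\inst$. Each summand $\wt{P}_C := P_C - p_C$ is a mean-zero function of at most $k$ variables, so $g$ is multilinear of degree between $1$ and $k$, with $\wh g(\emptyset) = 0$. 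It suffices to efficiently produce $x \in \sbits^n$ with $g(x) \geq c_k \cdot m/\sqrt{\degree}$.

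Via the deviation machinery used in \pref{thm:kxor}, which produces large-valued assignments for low-degree multilinear polynomials with small max-influence, this reduces to establishing two quantitative facts: (i) a variance lower bound $\|g\|_2^2 \geq c_k m$; and (ii) a max-influence upper bound $\max_v \Inf_v[g] \leq c_k \degree$. Combined, these give deviation of order $\|g\|_2^2 / \sqrt{\max_v \Inf_v[g]} \approx m/\sqrt{\degree}$, exactly matching the target. The triangle-free hypothesis (\pref{sec:preliminaries}) is what makes the two bounds simultaneously achievable: it restricts distinct constraints to share very few variables, so that any Fourier character $\chi_T$ with $|T| \geq 2$ appears in at most one constraint's expansion. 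This immediately yields $\|g_{\geq 2}\|_2^2 = \sum_C \|(\wt{P}_C)_{\geq 2}\|_2^2$ and $\Inf_v[g_{\geq 2}] \leq \sum_{C \ni v}\|\wt{P}_C\|_2^2 \leq \degree$ for the degree-$\geq 2$ part of $g$. For the linear part $g_1(x) = \sum_v \wh g(\{v\}) x_v$, triangle-freeness controls the Gramian of the single-coordinate Fourier contributions $\bigl(\wh{\wt{P}_C}(\{v\})\bigr)_{C \ni v}$, preventing destructive cancellations between them.

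The proof proceeds by a case analysis on where the Fourier mass of $g$ lives. If $\sum_C \|(\wt{P}_C)_{\geq 2}\|_2^2 \geq c_k m$, one applies the deviation theorem to $g_{\geq 2}$ directly; the variance bound (i) and influence bound (ii) for this restricted polynomial are immediate from the triangle-free analysis above, and a random resampling of the ``free'' linear part combined with the two-sided nature of the anti-concentration guarantee delivers $g(x) \geq c_k m/\sqrt{\degree}$. In the opposite regime, the degree-$1$ mass per constraint dominates, and a Shearer-style local rounding is performed on $g_1$. The key inequality to establish here is
\[
  \sum_v \bigl|\wh g(\{v\})\bigr| \;\geq\; c_k \sum_v \sqrt{d_v} \;\geq\; c_k \cdot \frac{m}{\sqrt{\degree}},
\]
where $d_v$ denotes the number of constraints containing $v$; this in turn is proved by lower-bounding each local sum $\bigl|\sum_{C \ni v}\wh{\wt{P}_C}(\{v\})\bigr|$ by $\Omega_k(\sqrt{d_v})$ through a random-sign/anti-concentration argument on the constraint neighborhood of $v$, justified because triangle-freeness forces the neighborhoods around a given variable to be combinatorially independent.

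The main obstacle is ensuring that triangle-freeness is powerful enough to control cross-correlations between constraints' mean-zero indicators. Without it, two constraints sharing a variable could have negatively correlated $\wt{P}_C$'s, collapsing $\|g\|_2^2$; or could have positively correlated degree-$1$ contributions that cancel in the linear-part argument. Triangle-freeness rules out the relevant ``second-order'' patterns of interference, and the technical heart of the proof is to quantify this by bounding the spectral norm of the off-diagonal Gramian $\bigl(\langle \wt{P}_{C_1}, \wt{P}_{C_2}\rangle\bigr)_{C_1 \neq C_2}$ by a constant fraction of the diagonal $\bigl(\|\wt{P}_C\|_2^2\bigr)_C$. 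This is the direct Fourier-analytic analog, for arbitrary predicates, of Shearer's classical triangle-free Max-Cut argument.
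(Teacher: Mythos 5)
Your proposal takes a genuinely different route from the paper, but there is a gap that I do not think is repairable within your framework: you have not addressed the \emph{sign problem}, and the one concrete lower bound you write down for the linear part is false.

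On the sign problem. The machinery underlying \pref{thm:kxor} (the constructive version of Dinur--Friedgut--Kindler--O'Donnell, \pref{thm:constructive}) produces an $x$ with $\abs{g(x)}$ large, not with $g(x)$ large. For \klin{k} with $k$ odd this suffices because $g$ is an odd function, so the sign can be flipped by negating $x$. An arbitrary CSP advantage polynomial $g = \sum_C \wt P_C$ has no such symmetry, and $g_{\geq 2}$ in particular is not odd. Consider triangle-free Max-Cut: $\wt P_C = -\tfrac12 x_i x_j$, so $g = g_2$ is purely quadratic; a large-$\abs{g_2(x)}$ output from the deviation theorem is equally likely to put nearly all edges uncut, which is useless. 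Your phrase about ``random resampling of the `free' linear part combined with the two-sided nature of the anti-concentration guarantee'' does not resolve this: when $g_1 \equiv 0$ (which happens for every XOR-type predicate) there is no free linear part to resample, and the deviation theorem gives no control on the sign of $g_{\geq 2}$. The paper's proof gets the sign for free because it is a \emph{greedy} algorithm, not an anti-concentration argument on a global polynomial: it fixes a random half $F$ of the variables, and for each remaining $j \in G$ sets $\bx_j = \sgn\bigl(\sum_{\ell \in N_j}\partial_j P_\ell(\bx_{S_\ell\setminus\{j\}}) - \theta_j\bigr)$, which by construction makes $\E\bigl[\bx_j \sum_\ell \partial_j P_\ell\bigr] = \E\bigl[\abs{\sum_\ell \partial_j P_\ell - \theta_j}\bigr] \geq 0$. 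The derivatives $\partial_j P_\ell(\bx_{S_\ell\setminus\{j\}})$ for $\ell \in N_j$ are \emph{random variables} over the random choice of $\bx_F$, mutually independent by ``no overlapping constraints,'' so anti-concentration (\pref{fact:anticonc}) applies to them and yields the $\sqrt{\abs{N_j}}$ gain. Randomness is injected by the partial assignment to $F$; it does not come from the deterministic Fourier coefficients of $g$.

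On the linear-part claim. You assert $\bigabs{\sum_{C\ni v}\wh{\wt P_C}(\{v\})} \geq \Omega_k(\sqrt{d_v})$ ``through a random-sign/anti-concentration argument,'' but the quantities $\wh{\wt P_C}(\{v\})$ are fixed real numbers determined by the instance, not random variables, so no anti-concentration statement applies to their sum. The claim is simply false: if $P_C$ is an XOR predicate then every $\wh{\wt P_C}(\{v\}) = 0$ and the left-hand side vanishes; more generally positive and negative linear coefficients across $C \ni v$ can cancel arbitrarily. (Triangle-freeness constrains how constraint \emph{scopes} overlap; it places no constraint on the signs of Fourier coefficients.) In the paper, the quantity that is anti-concentrated and hence $\Omega(\sqrt{\abs{N_j}})$ in expectation is $\sum_{\ell\in N_j}\partial_j P_\ell(\bx_{S_\ell\setminus\{j\}})$, and the $\Omega(2^{-k})$ variance per term comes from \pref{lem:highdegree}, not from Fourier coefficients of $\clf$ itself. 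The centering by the median $\theta_j$ is also essential (it guarantees $\E[\bx_j]=0$, which drives \pref{claim:indep} and hence keeps inactive constraints at exactly their random-assignment value); nothing in your proposal plays this role, and without it the greedy bits would be biased and could hurt the inactive constraints. I would encourage you to abandon the global-$g$ case analysis and instead build the argument around the two-phase random-partition-plus-thresholded-greedy structure, which is where the theorem's content actually lives.
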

\noindent  This theorem is proved in \pref{sec:trianglefree}. 
For simplicity, we state our results as achieving \emph{randomized} algorithms and leave the question of derandomizing them (e.g., by replacing true random bits with $O(k)$-wise independence or some other such distribution) to future work.

\subsection{Overview of our techniques}

All three algorithms that we present in this work follow the same
broad outline, while the details are different in each case.
To produce an assignment that beats a random
assignment, the idea is to partition the variables in to two sets
$(F,G)$ with $F$ standing for `Fixed' and $G$ standing for `Greedy' (in \pref{sec:proof}, these correspond to $[n] \setminus U$ and $U$ respectively).
The variables in $F$ are assigned independent and uniform random bits
and the variables in $G$ are assigned values \emph{greedily} based on
the values already assigned to $F$.
We will refer to constraints with exactly one variable from $G$ as \emph{active}
constraints.  The design of the \emph{greedy}
assignments and their analysis is driven by two key objectives.
\begin{enumerate}
  \item \label{item:obj1} Obtain a significant advantage over the random assignment on
      \emph{active} constraints.
\item \label{item:obj2} Achieve a value that is at least as good as
  the random assignment on inactive constraints.
\end{enumerate}

The simplest example is the algorithm for \klin{3} that we present
in \pref{sec:3xor}.  First, we appeal to a \emph{decoupling} trick due
to Khot-Naor \cite{KN08} to give an efficient approximation-preserving
reduction from an arbitrary instance $\inst$ of \klin{3} to a
bipartite instance $\tilde{\inst}$.  Specifically, the instance $\tilde{\inst}$
will contain two sets of variables $\{y_i\}_{i\in [n]}$ and
$\{z_i\}_{i\in [n]}$, with every constraint having exactly one variable from
$\{y_i\}_{i\in [n]}$ and two variables from $\{z_j\}_{j\in [n]}$.  Notice
that if we set $G = \{y_i\}_{i\in [n]}$, then objective
(\ref{item:obj2}) holds vacuously, i.e., every constraint in $\tilde{\inst}$
is active.  The former objective
(\ref{item:obj1}) is achieved as a direct consequence of
anticoncentration of low degree polynomials (see \pref{fact:anticonc}). In the case of \klin{k}, the second objective is achieved by slightly modifying the greedy assignment: we flip each of the assignments for the greedy variables with a small probability $\eta$ (that corresponds to one of the extrema of the degree-$k$ Chebyshev polynomials of the first kind).

Our algorithm for \emph{triangle-free} instances begins by
picking $(F,G)$ to be a random partition of the variables.
In this case, after fixing a random assignment to $F$, a natural greedy
strategy would proceed as follows: Assign each variable in $G$ a value that
satisfies the maximum the number of its own active constraints.

In order to achieve objective (\ref{item:obj2}), it is sufficient if for each
inactive constraint its variables are assigned independently and uniformly at random.
Since the instance is \emph{triangle-free},
for every pair of variables $x_i,x_j \in G$ the
active constraints of $x_i$ and $x_j$ are over disjoint sets of
variables.  This implies that the greedy assignments for variables
within each inactive constraint are already independent.
Unfortunately, the greedy assignment as defined above could
possibly be biased, and in general much worse than a random
assignment on the inactive constraints. We overcome this technical
hurdle by using a modified greedy strategy defined as follows.  Assign
$-1$ to all variables in $G$ and then for
each variable $x_i \in G$, consider the change in the number of active
constraints satisfied if we flip $x_i$ from $-1$ to $1$.  The algorithm
will flip the value only if this number exceeds an appropriately
chosen threshold $\theta_i$.  The threshold $\theta_i$ is chosen so as to ensure
that over all choices of values to $F$, the assignment to $x_i$ is unbiased.
Triangle-freeness implies that these
assignments are independent within each inactive constraint.  Putting
these ideas together, we obtain the algorithm for triangle-free
instances discussed in \pref{sec:trianglefree}.

%%% Local Variables:
%%% mode: latex
%%% TeX-master: "../lowdegcsp"
%%% End:

\section{Preliminaries}
\label{sec:preliminaries}

%%% Local Variables:
%%% mode: latex
%%% TeX-master: "../lowdegcsp"
%%% End:

\paragraph{Constraint satisfaction problems}
We will be considering a somewhat general form of constraint satisfaction problems.
An instance for us will consist of~$n$ Boolean variables and~$m$ constraints.
We call the variables $x_1, \dots, x_n$, and we henceforth think of them as
taking the Boolean values $\pm 1$.  Each constraint is a pair $(P_\ell, S_\ell)$
(for $\ell \in [m]$) where $P_\ell : \sbits^r \to \{0,1\}$ is the \emph{predicate},
and $S_\ell$ is the \emph{scope}, an ordered $r$-tuple of distinct coordinates
from~$[n]$.  The associated constraint is that $P_\ell(x_{S_\ell}) = 1$, where
we use the notation $x_{S}$ to denote variables $x$ restricted to coordinates~$S$.
We always assume (without loss of generality) that $P_\ell$ depends on all~$r$
coordinates.  The number $r$ is called the \emph{arity} of the constraint, and
throughout this paper $k$ will denote an upper bound on the arity of all
constraints.  Typically we think of~$k$ as a small constant.

We are also interested in the special case of \klin{k}.  By this we mean the case when all constraints are XORs of exactly $k$~variables or their negations; in other words, when every~$P_\ell$ is of the form $P_\ell(x_1, \dots, x_k) = \frac12 \pm \frac12 x_1 x_2 \cdots x_k$.  When discussing \klin{k} we will also always make the assumption that all scopes are distinct as sets; i.e., we don't have the same constraint or its negation more than once.

\paragraph{Hypergraph structure}  We will be particularly interested in the \emph{degree} $\deg(i)$ of each variable~$x_i$ in an instance.  This is simply the number of constraints in which $x_i$ participates; i.e., $\#\{\ell : S_\ell \ni i\}$.  Throughout this work, we let $\degree$ denote an upper bound on the degree of all variables.

For our second theorem, we will need to define the notion of ``triangle-freeness''.
\begin{definition} \label{def:trianglefree}
%  The \emph{co-occurrence graph} of an instance is defined to be the multigraph whose vertices are the variables and which has an edge for each %co-occurrence of two variables in a constraint scope. We say the instance is \emph{triangle-free} if it has no cycles of length at most~$3$; i.e., no %triangles and no multi-edges.
We say that an instance is \emph{triangle-free} if the scopes of any two distinct constraints intersect on at most one variable (``no overlapping constraints'') and, moreover, there are no three distinct constraints any two of whose scopes intersect (``no hyper-triangles''), see Figure~\ref{fig:triangle-free}.
\end{definition}

\begin{figure}
\begin{center}
\includegraphics{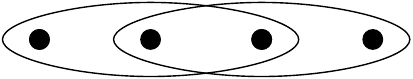}
\qquad
\includegraphics{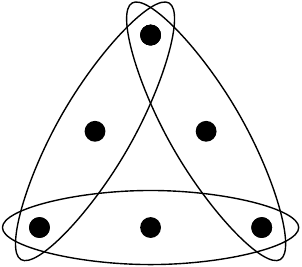}
\end{center}
\caption{The two forbidden configurations for triangle-free instances}
\label{fig:triangle-free}
\end{figure}

\paragraph{Fourier representation} We recall that any Boolean function $f : \sbits^n \to \R$ can be represented by a multilinear polynomial, or \emph{Fourier expansion},
\[
    f(x) = \sum_{S \subset [n]} \wh{f}(S) x^S, \quad \text{ where } x^S \defeq \prod_{i \in S} x_i.
\]
For more details see, e.g.,~\cite{Ryanbook}; we recall here just a few facts we'll need.  First, $\E[f(\bx)] = \wh{f}(\emptyset)$.  (Here and throughout we use \textbf{boldface} for random variables; furthermore, unless otherwise specified $\bx$ refers to a uniformly random Boolean string.)  Second, Parseval's identity is $\|f\|_2^2 = \E[f(\bx)^2] = \sum_S \wh{f}(S)^2$, from which it follows that $\Var[f(\bx)] = \sum_{S \neq \emptyset} \wh{f}(S)^2$.  Third,
\[
    \Inf_i[f] = \sum_{S \ni i} \wh{f}(S)^2 = \E[(\partial_i f)(\bx)^2],
\]
where $\partial_i f$ is the \emph{derivative} of~$f$ with respect to the $i$th coordinate.  This can be defined by the factorization $f(x) = x_i \cdot (\partial_i f)(x') + g(x')$, where $x' = (x_1, \dots, x_{i-1}, x_{i+1}, \dots, x_n)$, or equivalently by $\partial_i f(x') = \frac{f(x',+1) - f(x',-1)}{2}$, where here $(x',b)$ denotes $(x_1, \dots, x_{i-1}, b, x_{i+1}, \dots, x_n)$.\ROnote{Do we want to make it a lemma that, like, for Max-$k$XOR the $i$th influence is directly related to $\deg(i)$?}  We record here a simple fact about these derivatives:
\begin{lemma} \label{lem:highdegree}
  For any predicate $P : \sbits^r \to \{0,1\}$,  $r \geq 2$, we have $\Var[(\partial_i P)(\bx)] \geq \Omega(2^{-r})$ for all $i$.
\end{lemma}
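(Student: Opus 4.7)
My plan is to compute $\Var[(\partial_i P)(\bx)]$ directly from the distribution of the three-valued random variable $(\partial_i P)(\bx)$, and then rule out the degenerate configurations using the standing preliminaries assumption that each predicate depends on all of its coordinates.

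First I would observe that since $P$ is $\{0,1\}$-valued, the defining formula $\partial_i P(x) = \tfrac{1}{2}(P(\ldots,+1,\ldots) - P(\ldots,-1,\ldots))$ forces $\partial_i P$ to take values only in $\{-\tfrac{1}{2}, 0, +\tfrac{1}{2}\}$. Setting $a := \Pr[\partial_i P(\bx) = +\tfrac{1}{2}]$ and $b := \Pr[\partial_i P(\bx) = -\tfrac{1}{2}]$ (both of which are integer multiples of $2^{-(r-1)}$, since $\partial_i P$ depends on only $r-1$ coordinates), a direct calculation yields
\[
  \Var[(\partial_i P)(\bx)] \;=\; \tfrac{1}{4}\bigl[(a+b) - (a-b)^2\bigr] \;=\; \tfrac{1}{4}\bigl[a(1-a) + b(1-b) + 2ab\bigr].
\]

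The main step is to exclude the three degenerate corners $(a,b) \in \{(0,0), (1,0), (0,1)\}$, which are exactly where the expression above can vanish. If $(a,b) = (0,0)$ then $\partial_i P \equiv 0$, so $P$ does not depend on coordinate $i$; if $(a,b) = (1,0)$ then $\partial_i P \equiv +\tfrac{1}{2}$, which forces $P(x) = (1+x_i)/2$; and $(a,b) = (0,1)$ analogously forces $P(x) = (1-x_i)/2$. In each of these cases $P$ fails to depend on all $r \geq 2$ of its coordinates, contradicting the standing assumption from the preliminaries.

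Consequently at least one of $a, b$ lies strictly between $0$ and $1$; say $a = k/2^{r-1}$ with $1 \leq k \leq 2^{r-1}-1$. Then $a(1-a) = k(2^{r-1}-k)/2^{2(r-1)} \geq (2^{r-1}-1)/2^{2(r-1)} \geq 2^{-r}$ for $r \geq 2$, and combining this with the non-negativity of the other two terms in the bracketed expression yields $\Var[(\partial_i P)(\bx)] \geq 2^{-(r+2)} = \Omega(2^{-r})$. The only step that requires any real thought is the short corner-case exclusion; everything else is mechanical, so I do not anticipate a significant obstacle.
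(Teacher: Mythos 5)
Your proof is correct and follows essentially the same route as the paper: both observe that $\partial_i P$ takes values in $\{-\tfrac12, 0, \tfrac12\}$, rule out the three constant possibilities (constantly $0$ contradicts dependence on coordinate $i$; constantly $\pm\tfrac12$ contradicts $r \geq 2$ and dependence on all coordinates), and conclude a variance lower bound. Your version simply makes explicit the final step that the paper leaves as a one-line assertion (``nonconstant, hence $\Var \geq \Omega(2^{-r})$''), by writing the variance in terms of $a,b$ and bounding it directly.
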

\begin{proof}
The function $\partial_i P(x)$ takes values in $\{-\frac12, 0, \frac12\}$.
It cannot be constantly~$0$, since we assume~$P$ depends on its $i$th input.  It also cannot be constantly $\frac12$, else we would have~$P(x) = \frac12 + \frac12 x_i$ and so~$P$ would not depend on all $r \geq 2$ coordinates.  Similarly it cannot be constantly $-\frac12$.  Thus $\partial_i P(x)$ is nonconstant, so its variance is~$\Omega(2^{-r})$.%\ROnote{precisely, at least $2^{-r-2}$, I think.}
\end{proof}

Given an instance and an assignment $x \in \sbits^n$,
the number of constraints satisfied by the assignment is simply $\sum_{\ell} P_\ell(x_{S_\ell})$.  This can be thought of as a multilinear polynomial
$\sbits^n \to \R$
of degree\footnote{We have the usual unfortunate terminology clash; here we mean degree as a polynomial.} at most~$k$.  We would like to make two minor adjustments to it, for simplicity.  First, we will normalize it by a factor of $\frac{1}{m}$ so as to obtain the \emph{fraction} of satisfied constraints.  Second, we will replace $P_\ell$ with $\ol{P}_\ell$, defined by
\[
    \ol{P}_\ell = P_\ell - \E[P_\ell] = P_\ell - \wh{P_\ell}(\emptyset).
\]
In this way, $\ol{P}_\ell(x_{S_\ell})$ represents the \emph{advantage} over a random assignment.  Thus given an instance, we define the \emph{associated polynomial} $\clf(x)$ by
\[
    \clf(x) = \frac{1}{m} \sum_{\ell = 1}^m \ol{P}_\ell(x_{S_\ell}).
\]
This is a polynomial of degree at most~$k$ whose value on an assignment~$x$ represents the advantage obtained over a random assignment in terms of the fraction of constraints satisfied.  In general, the algorithms in this paper are designed to find assignments $x \in \sbits^n$ with $\clf(x) \geq \Omega(\frac{1}{\sqrt{\degree}})$.\ROnote{If we wanted, here is where we might link influence to $\degree$.}

\paragraph{Low-degree polynomials often achieve their expectation}  Our proofs will frequently rely on the following fundamental fact from Fourier analysis, whose proof depends on the well-known ``hypercontractive inequality''.  A proof of this fact appears in, e.g.,~\cite[Theorem 9.24]{Ryanbook}.
\begin{fact} \label{fact:anticonc}
Let $f:\sbits^n \rightarrow \R$ be a multilinear polynomial of degree at most $k$.
Then $\Pr[f(\bx) \geq \E[f]] \ge \frac{1}{4} \exp(-2k)$.
In particular, by applying this to $f^2$, which has degree at most $2k$, we get
\[
    \Pr\Bigl[|f(\bx)| \geq \|f\|_2\Bigr] \geq \exp(-O(k))
\]
which implies that
\[
   \E\Bigl[|f(\bx)|\Bigr] \geq \exp(-O(k)) \cdot \|f\|_2 \geq \exp(-O(k)) \cdot \stddev[f(\bx)]
	\; .
\]
\end{fact}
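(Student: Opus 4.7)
The plan is to prove the first assertion, $\Pr[f(\bx) \geq \E[f]] \geq \tfrac{1}{4}\exp(-2k)$, via a Paley--Zygmund style argument driven by hypercontractivity; the remaining two assertions then follow routinely. After replacing $f$ by $f - \wh f(\emptyset)$ and rescaling I may assume $\E[f] = 0$ and $\|f\|_2 = 1$. The key input is the Bonami--Beckner $(2,q)$-hypercontractive inequality: for any multilinear $g\from \sbits^n \to \R$ of degree at most $d$, $\|g\|_q \leq (q-1)^{d/2}\|g\|_2$ for every $q \geq 2$. Taking $q=4$ for concreteness gives $\|f\|_4 \leq 3^{k/2}$.

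Next I would turn this into a lower bound on $\|f\|_1$ by H\"older interpolation. Writing $|f|^{2} = |f|^{2/3}\cdot|f|^{4/3}$ and applying H\"older with conjugate exponents $3$ and $3/2$ yields $\|f\|_2^{2} \leq \|f\|_1^{2/3}\|f\|_4^{4/3}$, hence $\|f\|_1 \geq \|f\|_2^{3}/\|f\|_4^{2} \geq 3^{-k}$. Since $\E[f]=0$, the positive and negative parts of $f$ have equal expectation, so $\E[f\cdot \Ind\{f\geq 0\}] = \tfrac{1}{2}\E[|f|] \geq \tfrac{1}{2}\cdot 3^{-k}$. Cauchy--Schwarz bounds the left side by $\|f\|_2\sqrt{\Pr[f\geq 0]} = \sqrt{\Pr[f\geq 0]}$, and squaring gives $\Pr[f\geq 0] \geq \tfrac{1}{4}\cdot 9^{-k}$. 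This is already of the form $\tfrac{1}{4}\exp(-O(k))$ promised by the statement; to tighten the exponent to exactly $2k$ one would redo the interpolation with the full $(2,q)$-hypercontractive bound and let $q \to 2^{+}$, using that $(q-1)^{q/(q-2)} \to e^{2}$ as $q \to 2^{+}$.

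The second assertion is then the first assertion applied to $f^{2}$: that polynomial is multilinear of degree at most $2k$ and has mean $\|f\|_2^{2}$, so $\Pr[f(\bx)^{2} \geq \|f\|_2^{2}] \geq \exp(-O(k))$, i.e.\ $\Pr[|f(\bx)| \geq \|f\|_2] \geq \exp(-O(k))$. The third assertion is immediate from the trivial estimate $\E[|f(\bx)|] \geq \|f\|_2\cdot \Pr[|f(\bx)| \geq \|f\|_2]$ combined with $\|f\|_2 \geq \stddev[f(\bx)]$. The only genuinely nontrivial ingredient in the whole argument is the hypercontractive inequality, which converts the degree hypothesis into a quantitative comparison between the $L^{2}$ and $L^{4}$ (or, if one wants the sharp constant, $L^{q}$ for $q$ near $2$) norms; everything after that is elementary, and I do not anticipate any real obstacle.
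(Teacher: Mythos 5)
Your proof is correct. The paper does not supply its own proof of this fact but instead cites Theorem 9.24 of O'Donnell's \emph{Analysis of Boolean Functions}, and your argument --- Bonami--Beckner $(2,q)$-hypercontractivity, $L^1$--$L^2$--$L^q$ log-convexity interpolation, then a Cauchy--Schwarz/Paley--Zygmund step using $\E[f\,\Ind\{f\ge 0\}]=\tfrac12\E[|f|]$ when $\E[f]=0$ --- is precisely the standard proof given there, including the observation that the sharp constant $\tfrac14 e^{-2k}$ is obtained by letting $q\to 2^{+}$.
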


\section{A simple proof for \texorpdfstring{Max-$\boldsymbol{3}$XOR}{Max-3XOR}} \label{sec:3xor}
We begin by proving \pref{thm:kxor} in the case of \klin{3}, as the proof can be somewhat streamlined in this case.
Given an instance of \klin{3} we have the corresponding polynomial
\[
    \clf(x) = \sum_{|S| = 3} \wh{\clf}(S) x^S = \sum_{i,j,k \in [n]} a_{ijk} x_ix_jx_k,
\]
where $\wh{\clf}(S) \in \{\pm \frac{1}{2m}, 0\}$ depending on whether the corresponding constraint exists in the instance, and where we have introduced $a_{ijk} = \tfrac16 \wh{\clf}(\{i,j,k\})$ for $i,j,k \in [n]$ distinct. We now use the trick of ``decoupling'' the first coordinate (cf.~\cite[Lem.~2.1]{KN08}); i.e., our algorithm will consider $\wt{\clf}(y,z) = \sum_{i,j,k} a_{ijk} y_i z_j z_k$, where $y_1, \dots, y_n, z_1, \dots, z_n$ are new variables. The algorithm will ultimately produce a good assignment $y,z \in \{\pm 1\}^n$ for~$\wt{\clf}$.  Then it will define an assignment $\bx \in \{\pm 1\}^n$ by using one of three ``randomized rounding'' schemes:
\[
    \text{w.p.\  $\tfrac49$, } \
                                        \bx_i = \begin{cases}
                                                   y_i & \text{w.p.\  $\frac12$}\\
                                                   z_i & \text{w.p.\  $\frac12$}
                                              \end{cases} \ \ \forall i; \qquad
    \text{w.p.\  $\tfrac49$, } \
                                        \bx_i = \begin{cases}
                                                   y_i & \text{w.p.\  $\frac12$}\\
                                                   -z_i & \text{w.p.\  $\frac12$}
                                              \end{cases} \ \ \forall i; \qquad
    \text{w.p.\  $\tfrac19$, }\
                                        \bx_i = -y_i \ \ \forall i.
\]
We have that $\E[\clf(\bx)]$ is equal to
\begin{align}
    &\tfrac49\littlesum_{i,j,k}a_{ijk}(\tfrac{y_i+z_i}{2})(\tfrac{y_j+z_j}{2})(\tfrac{y_k+z_k}{2})
+ \tfrac49\littlesum_{i,j,k}a_{ijk}(\tfrac{y_i-z_i}{2})(\tfrac{y_j-z_j}{2})(\tfrac{y_k-z_k}{2})
+\tfrac19\littlesum_{i,j,k}a_{ijk}(-y_i)(-y_j)(-y_k) \nonumber\\
    =\ &\tfrac19 \sum_{i,j,k}a_{ijk}(y_i z_j z_k + z_i y_j z_k + z_i z_j y_k) = \frac{1}{3} \wt{\clf}(y,z). \label{eqn:i-have-an-analogue}
\end{align}
Thus in expectation, the algorithm obtains an assignment for $\clf$ achieving at least $\frac13$ of what it achieves for $\wt{\clf}$.

Let us now write $\wt{\clf}(y,z) = \sum_i y_i G_i(z)$, where $G_i(z) = \sum_{j,k}a_{ijk}z_jz_k$.  It suffices for the algorithm to find an assignment for~$z$ such that $\sum_i |G_i(z)|$ is large, as it can then achieve this quantity by taking $y_i = \sgn(G_i(z))$. The algorithm simply chooses $\bz \in \{\pm 1\}^n$ uniformly at random.  By Parseval
we have $\E[G_i(\bz)^2] = \sum_{j<k} (2a_{ijk})^2 = \frac19\Inf_i[\clf]$ for each~$i$.  Applying \pref{fact:anticonc} (with $k=2$) we therefore get $\E[|G_i(\bz)|] \geq \Omega(1) \cdot \sqrt{\Inf_i[\clf]}$.  Since $\Inf_i[\clf] = \deg(i)/4m^2$, we conclude
\[
    \E\left[\littlesum_i |G_i(\bz)|\right] \geq \Omega(1) \cdot \littlesum_i \tfrac{\sqrt{\deg(i)}}{m} \geq \Omega(1) \cdot \littlesum_i \tfrac{\deg(i)}{m\sqrt{D}} = \Omega(1) \cdot \frac{1}{\sqrt{\degree}}.
\]
As $\littlesum_i |G_i(\bz)|$ is bounded by~$1/2$,\Onote{was: 1} Markov's inequality implies that the algorithm can with high probability find a~$z$ achieving $\sum_i |G_i(z)| \geq \Omega(\frac{1}{\sqrt{\degree}})$ after $O(\sqrt{\degree})$ trials of~$\bz$.  As stated, the algorithm then chooses $y$ appropriately to attain $\wt{\clf}(y,z) \geq \Omega(\frac{1}{\sqrt{\degree}})$, and finally gets $\frac13$ of this value (in expectation) for $\clf(x)$.  \ROnote{Shall we bother to point out we can upgrade this final expectation to high probability?  That we can derandomize the algorithm?  Perhaps `no' and `no'?}
\Bnote{Added a comment about derandomization in the intro}

\paragraph{Derandomization} It is easy to efficiently derandomize the above algorithm. The main step is to recognize that ``$(2,4)$-hypercontractivity'' is all that's needed for \pref{fact:anticonc} (perhaps with a worse constant); thus it holds even when the random bits are merely $4$-wise independent.  This is well known, but we could not find an explicit reference; hence we give the proof in the case when $f$ is homogeneous of degree~$2$ (the case that's needed in the above algorithm).  Without loss of generality we may assume $\E[f(\bx)] = 0$ and $\E[f(\bx)^2] = 1$.  Then it's a simple exercise to check that $\E[f(\bx)^4] \leq 15$, and this only requires the bits of~$\bx$ to be $4$-wise independent. But now
\[
    \Pr[f(\bx) \geq 0] = \E[1_{\{f(\bx) \geq 0\}}] \geq \E[.13f(\bx) + .06 f(\bx)^2 - .002f(\bx)^4] \geq .06 - .002 \cdot 15 = .03
\]
where we used the elementary fact $1_{\{t \geq 0\}} \geq .13 t +.06 t^2 -.002 t^4$ for all $t \in \R$. Thus indeed the algorithm can find a $z$ achieving $\sum_i |G_i(z)| \geq \Omega(\frac{1}{\sqrt{D}})$ by enumerating all strings in a $4$-wise independent set; it is well known this can be done in polynomial time.  Following this, the algorithm chooses string~$y$ deterministically.  Finally, it is clear that each of the three different randomized rounding schemes only requires $3$-wise independence, and a deterministic algorithm can simply try all three and choose the best one.

\section{A general result for bounded-influence functions}
\label{sec:proof}

One can obtain our \pref{thm:kxor} for higher odd~$k$ by generalizing the proof in the preceding section.  Constructing the appropriate ``randomized rounding'' scheme to decouple the first variable becomes slightly more tricky, but one can obtain the identity analogous to~\eqref{eqn:i-have-an-analogue} through the use of Chebyshev polynomials.  At this point the solution becomes very reminiscent of the Dinur~et~al.~\cite{Dinuretal} work.  Hence in this section we will simply directly describe how one can make~\cite{Dinuretal} algorithmic.

The main goal of~\cite{Dinuretal} was to understand the ``Fourier tails'' of bounded degree-$k$ polynomials.  One of their key technical results was the following theorem, showing that if a degree-$k$ polynomial has all of its influences small, it must deviate significantly from its mean with noticeable probability:
\begin{theorem}                                     \label{thm:dfko3}
    (\cite[Theorem~3]{Dinuretal}.)  There is a universal constant $C$ such that the following holds.  Suppose $g : \sbits^n \to \R$ is a polynomial of degree at most~$k$ and assume $\Var[g] = 1$.  Let $t \geq 1$ and suppose that $\Inf_i[g] \leq C^{-k} t^{-2}$ for all~$i \in [n]$.  Then
    \[
        \Pr[|g(\bx)| \geq t] \geq \exp(-C t^2 k^2 \log k).
    \]
\end{theorem}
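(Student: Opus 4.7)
The plan is to use a random-restriction argument to reveal a substantial ``linear part'' of $g$ on a sub-cube, and then to combine Berry--Esseen-style anti-concentration for the linear part with hypercontractive tail bounds on the residual higher-degree part. The small-influence hypothesis enters precisely because it ensures the revealed linear part has no single dominant coefficient, which is what allows a Gaussian-type tail at level $t$ rather than a trivial bound.

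First I would pick $\rho = \Theta(1/k)$ and let $\bU \subseteq [n]$ include each coordinate independently with probability $\rho$; fix a uniformly random assignment $\boldy \in \pmo^{[n]\setminus \bU}$ and let $g_{\boldy} : \pmo^{\bU} \to \R$ be the resulting restriction. A direct Fourier calculation gives
\[
\E_{\bU,\boldy} \Bigl[\littlesum_{i \in \bU} \widehat{g_{\boldy}}(\{i\})^2\Bigr] \;=\; \littlesum_{S \neq \emptyset} \hat g(S)^2 \cdot |S|\rho(1-\rho)^{|S|-1} \;=\; \Omega(1),
\]
where the last equality uses $\Var[g]=1$, $|S| \le k$, and the choice $\rho \asymp 1/k$. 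So with constant probability over $(\bU,\boldy)$, the linear part of $g_{\boldy}$ has $\ell_2$-mass bounded below by a constant. Moreover, for each $i$, $\E_{\boldy}[\widehat{g_{\boldy}}(\{i\})^2] \leq \Inf_i[g] \leq C^{-k} t^{-2}$, so a union-bound / Markov argument conditions us on an event where every single coefficient $|\widehat{g_{\boldy}}(\{i\})|$ is at most $O(C^{-k/2} t^{-1})$, a tiny fraction of the total standard deviation.

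Next I would write $g_{\boldy}(\bx) = \widehat{g_{\boldy}}(\emptyset) + L(\bx) + H(\bx)$ where $L$ is the linear part and $H$ is the higher-degree part on $\pmo^{\bU}$. Berry--Esseen applied to $L$, using the ratio of $\max_i|\widehat{g_{\boldy}}(\{i\})|$ to $\|L\|_2 = \Omega(1)$ being at most $C^{-k/2}t^{-1}$, yields $\Pr_{\bx}[L(\bx) \geq 2t] \geq (1-\Phi(O(t))) - o(1) \geq \exp(-O(t^2))$. For the residual $H$, one has $\|H\|_2^2 \le 1$ and $\deg H \le k$; a hypercontractive moment bound gives $\Pr[|H(\bx)| \ge t] \le \exp(-\Omega(t^{2/k}))$, so with probability $1-o(1)$ the event $|H(\bx)| \le t$ holds. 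Taking a product over these essentially independent events (the outer randomness over $(\bU,\boldy)$ and the inner over $\bx$) yields a lower bound on $\Pr[|g(\bx)| \ge t]$; iterating / rescaling carefully to balance the linear and high-degree regimes should produce the claimed exponent $\exp(-C t^2 k^2 \log k)$.

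The main obstacle, and where the $k^2 \log k$ factor really appears, is the mismatch between the Gaussian-like tail of $L$ at scale $t$ and the much heavier hypercontractive tail of $H$: one cannot naively union-bound without losing an unacceptable exponential in $k$. The delicate step is choosing $\rho$, bucketing the Fourier levels by ``degree vs.\ mass,'' and possibly recursing on $g_{\boldy}$ (which is again a low-degree polynomial with controlled influences) so that the linear-part extraction is applied to the right scale. Getting the constants to line up — in particular, ensuring that the small-influence hypothesis $\Inf_i[g] \le C^{-k} t^{-2}$ is exactly what is needed for Berry--Esseen at deviation $t$ to survive the higher-degree perturbation — is where the proof has to work hardest.
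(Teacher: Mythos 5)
There is a genuine gap here, and your own last paragraph essentially admits it. Decomposing $g_{\boldy}$ as $\widehat{g_{\boldy}}(\emptyset) + L + H$ and then union-bounding $\Pr[L\ge 2t]$ against $\Pr[|H|\ge t]$ cannot work: the Berry--Esseen lower bound for $L$ is of order $\exp(-\Theta(t^2))$, while the hypercontractive upper bound for a degree-$k$ polynomial $H$ at deviation $t$ is only $\exp(-\Theta(t^{2/k}))$, which is \emph{larger} than $\exp(-\Theta(t^2))$ for every $t\ge 1$ and $k\ge 2$. So the difference in your union bound is negative, and there is no choice of $\rho$, bucketing of Fourier levels, or recursion that fixes a sign error; the $L$-vs-$H$ split is the wrong decomposition. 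A second, more mundane error: with $\rho=\Theta(1/k)$ the quantity $\sum_{S}\hat g(S)^2\,|S|\rho(1-\rho)^{|S|-1}$ is only $\Omega(1/k)$ in general (take all mass on level $1$), not $\Omega(1)$; DFKO get around this by choosing the restriction scale $\rho=2^{-s}$ where a dyadic level-bucket $[2^{s-1},2^s]$ carries mass $\ge 1/\log_2 k$, which yields $\Omega(1/\log k)$ and is precisely the source of the $\log k$ in the final exponent. Finally, Berry--Esseen at deviation $t$ only beats the error term when $t=O(\sqrt{k\log C})$; it does not give a bound for all $t\ge 1$ as the theorem requires.

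What actually closes the gap in~\cite{Dinuretal} (and in the algorithmic version sketched in \pref{sec:proof}) is qualitatively different: one works with the $\ell_1$ mass $\sum_i|\widehat{g_{\boldy}}(\{i\})|$ rather than the $\ell_2$ mass, because the small-influence hypothesis caps each $|\widehat{g_{\boldy}}(\{i\})|$ by roughly $C^{-k/2}t^{-1}$ and thus forces $\sum_i|\widehat{g_{\boldy}}(\{i\})|\gtrsim C^{k/2}t/\log k\ge 2t(k+1)$. Setting $x^*_i=\sgn(\widehat{g_{\boldy}}(\{i\}))$ maximizes the linear part, and then the higher-degree terms are handled not by a tail bound but by the noise operator plus Chebyshev extrema (\pref{lem:linear}): $T_\eta g_{\boldy}(x^*)$ is a degree-$k$ univariate polynomial in $\eta$ whose linear coefficient is huge, so one of the $k+1$ Chebyshev nodes $\eta$ forces $|T_\eta g_{\boldy}(x^*)|\ge t$ regardless of the high-degree part, and one then applies the \emph{lower-tail} anti-concentration \pref{fact:anticonc} to the single polynomial $g_{\boldy}(x^*\cdot\bz)$. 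That replaces your $L$-vs-$H$ union bound with a one-shot argument and is where the proof gets its teeth; your sketch is missing exactly this idea.
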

In the context of \klin{k}, this theorem already nearly proves our \pref{thm:kxor}.  The reason is that in this context, the associated polynomial $\clf(x)$ is given by
\[\clf(x)=\frac{1}{2m} \sum_{\ell=1}^m b_\ell \prod_{j \in S_\ell} x_j, ~ \text{where } b_\ell \in \{-1,1\}.\]
Hence $\Var[\clf] = 1/4m$ and $\Inf_i[\clf] = \deg(x_i)/4m^2 \leq \degree/4m^2$.
\ROnote{again, should this be a fact or a lemma or something?}  Taking $g = 2\sqrt{m} \cdot \clf$ and $t = \exp(-O(k)) \cdot \sqrt{m/\degree}$, \pref{thm:dfko3} immediately implies that
\begin{equation}    \label{eqn:dfko-guar}
    \Pr\Bigl[|\clf(\bx)| \geq \exp(-O(k)) \cdot \frac{1}{\sqrt{D}}\Bigr] \geq \exp(-O(m/\degree)).
\end{equation}
This already shows the desired existential result, that there \emph{exists} an assignment beating the random assignment by $\exp(-O(k)) \cdot \frac{1}{\sqrt{D}}$.  The only difficulty is that the low probability bound in~\eqref{eqn:dfko-guar} does not imply we can find such an assignment efficiently.

However this difficulty really only arises because~\cite{Dinuretal} had different goals.  In their work, it was essential to show that~$g$ achieves a slightly large value on a completely \emph{random} input.\footnote{Also, their efforts were exclusively focused on the parameter~$k$, with quantitative dependencies on~$t$ not mattering.  Our focus is essentially the opposite.}  By contrast, we are at liberty to show~$g$ achieves a large value however we like --- semi-randomly, greedily --- so long as our method is algorithmic.  That is precisely what we do in this section of the paper.  Indeed, in order to ``constructivize''~\cite{Dinuretal}, the only fundamental adjustment we need to make is at the beginning of the proof of their Lemma~1.3: when they argue that ``$\Pr[|\ell(\bx)| \geq t'] \geq \exp(-O({t'}^2))$ for the degree-$1$ polynomial~$\ell(x)$'', we can simply greedily choose an assignment~$x$ with $|\ell(x)| \geq t'$.

Our constructive version of \pref{thm:dfko3} follows.  It directly implies our \pref{thm:kxor}, as described above.

\def\AdvRand{\textsc{AdvRand}\xspace}

\begin{theorem}\label{thm:constructive}
    There is a universal constant $C$ and a randomized algorithm such that the following holds.  Let $g : \sbits^n \to \R$ be a polynomial with degree at most~$k$ and  $\Var[g] = 1$ be given.  Let $t \geq 1$ and suppose that $\Inf_i[g] \leq C^{-k} t^{-2}$ for all~$i \in [n]$.  Then with high probability the algorithm outputs an assignment~$x$ with $|g(x)| \geq t$.  The running time of the algorithm is $\poly(m,n,\exp(k))$, where $m$ is the number of nonzero monomials in~$g$.\footnote{For simplicity in our algorithm, we assume that exact real arithmetic can be performed efficiently.}
\ROnote{perhaps here is where we could discuss estimating the Chebyshev polynomials.  I know an exact citation for this if we want.  We're cheating here in an even more excruciatingly annoying way; we don't really have $\Var[g] = 1$, and if one insists on it by scaling then there is the potential for irrational square-roots to arise\dots}\AVnote{I don't think we need to talk about estimating issues. }
\end{theorem}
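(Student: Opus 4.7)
The strategy is to convert the inductive proof of DFKO's Theorem~\ref{thm:dfko3} into an algorithm by making exactly the substitution flagged at the end of the excerpt. Each random-restriction step in their induction on the degree~$k$ is already implementable by sampling; the only genuinely non-constructive ingredient is the anticoncentration bound $\Pr[|\ell(\bx)| \ge t'] \ge \exp(-O({t'}^2))$ for a degree-$1$ polynomial~$\ell$, which is invoked at the base of their recursion. That single step is what I replace with a deterministic greedy choice.

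For the base case, a degree-$1$ polynomial $\ell(x) = a_0 + \sum_i a_i x_i$ has $\Var[\ell] = \sum_i a_i^2 = 1$ and $\Inf_i[\ell] = a_i^2 \le C^{-1} t^{-2}$, so $\max_i |a_i| \le 1/(\sqrt{C}\, t)$. Therefore
\[
    \sum_i |a_i| \;\ge\; \frac{\sum_i a_i^2}{\max_i |a_i|} \;\ge\; \sqrt{C}\, t \;\ge\; t
\]
once the universal constant $C$ is chosen at least~$1$. Setting $x_i = \sgn(a_i)$ yields $\ell(x) = a_0 + \sum_i |a_i|$, so between $x$ and $-x$ one of them achieves $|\ell(\cdot)| \ge \sum_i |a_i| \ge t$. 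This is exactly the deterministic replacement for the single probabilistic step flagged in the excerpt.

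For the inductive step $k \ge 2$ I would mirror the DFKO random-restriction argument: sample a random subset $J \subseteq [n]$ (each index included independently with some probability $\rho(k)$) and independent uniform $\pm 1$ values $y$ for $[n]\setminus J$. Standard Fourier computations already present in their proof show that with probability at least some $p(k) \ge \exp(-\poly(k))$ over the random choice of $(J,y)$, the restriction $g_y : \pmo^J \to \R$ satisfies the hypotheses of the theorem at degree $k-1$ (variance $\Omega(1)$, influences still bounded by $C^{-(k-1)} t^{-2}$ after the degradation is absorbed into~$C$). The algorithm then invokes itself recursively on $g_y$ to obtain an assignment $x_J \in \pmo^J$, and returns $(x_J, y)$. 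Running the outer sampling $O(p(k)^{-1} \log n)$ times and keeping the best candidate boosts the success probability to $1-o(1)$, with total running time $\poly(m,n) \cdot \exp(\poly(k))$.

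The main obstacle is the quantitative bookkeeping through the recursion: one must verify that each random restriction simultaneously drops the degree, keeps the variance a positive constant, and keeps the uniform influence bound compatible with the next inductive call, and that these constants degrade in a controlled way so a single global choice of $C$ and of $\rho(k)$ works for all $k$ levels. This is essentially the content of DFKO's Lemma~1.3, which we now instantiate step by step rather than analyzing purely in expectation. The key point is that we incur no super-polynomial dependence on~$t$: the only factor of the form $\exp(\Omega(t^2))$ in the original bound came from the degree-$1$ base case, which we have eliminated by the greedy sign assignment above.
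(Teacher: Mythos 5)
Your key observation is correct and matches the paper's: the only genuinely non-constructive ingredient in DFKO is the anticoncentration bound for the degree-$1$ polynomial, and it can be replaced by the deterministic choice $x_i = \sgn(a_i)$, yielding $|\ell(x)| \geq \sum_i|a_i| \geq (\sum_i a_i^2)/\max_i|a_i| \geq t$. That computation is sound and corresponds exactly to the first step of the paper's Lemma~\ref{lem:linear}.

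However, your account of what surrounds that step has a genuine gap. You describe DFKO's argument as an \emph{induction on the degree}, where one random restriction drops the polynomial from degree $k$ to degree $k-1$, recurses, and the degree-$1$ anticoncentration is invoked only at the bottom of $k$ levels of recursion. This is not how the argument works, and it cannot be made to work as stated: a random restriction (fixing some variables to random $\pm 1$ values) does not lower the degree of a multilinear polynomial. If $g$ has a nonzero degree-$k$ coefficient on $S$, then any restriction that keeps all of $S$ alive still has a degree-$k$ term; only a \emph{derivative}, not a restriction, reduces degree, and derivatives do not preserve the variance/influence structure your inductive hypothesis needs. So the claim that ``with probability $\exp(-\poly(k))$ the restriction satisfies the hypotheses at degree $k-1$'' is false, and the recursion has no base to ever reach.

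The actual argument — both DFKO's and the paper's — uses a \emph{single} random restriction, tuned by choosing a scale $s$ so that a $1/\log k$ fraction of the Fourier weight lands on level $1$ of the restricted polynomial $g_y$. This gives $\sum_j |\wh{g_y}(\{j\})|$ large, and greedily setting $x^*_j = \sgn(\wh{g_y}(\{j\}))$ makes the \emph{linear} part large. But $g_y$ still has degree up to $k$, so this alone does not make $|g_y(x^*)|$ large; the higher-order terms could cancel it. The essential remaining ingredient — which is absent from your proposal — is the Chebyshev-polynomial noise trick: one considers $(T_\eta g_y)(x^*) = \sum_S \eta^{|S|}\wh{g_y}(S)(x^*)^S$ as a polynomial in $\eta$, and the extremal property of the degree-$k$ Chebyshev polynomial guarantees that for some $\eta$ among the $k+1$ Chebyshev extrema (scaled by $1/2$) the linear term dominates, giving $\E_{\bz\leftarrow_\eta}\bigl[|g_y(x^*\cdot\bz)|\bigr]\ge t$. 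One then finishes with biased hypercontractivity. Without that step the greedy sign choice does not imply anything about $|g_y(x^*)|$ itself, so your proof does not go through.
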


The algorithm \AdvRand achieving Theorem~\ref{thm:constructive} is given below.  It is derived directly from~\cite{Dinuretal}, and succeeds with probability that is inverse polynomial in $n$.%\ROnote{Does it?  Inverse-polynomial in what?}
 The success probability is then boosted by running the algorithm multiple times.
We remark that $\eta_0^{(k)}, \eta_1^{(k)}, \dots, \eta_k^{(k)}$ denote the $k+1$ extrema in $[-1,1]$ of the $k$th Chebyshev polynomial of the first kind $T_k(x)$, and are given by $\eta_j^{(k)} = \cos(j \pi/k)$ for $0 \le j \le k$. We now describe the algorithm below, for completeness. In the rest of the section, we will assume without loss of generality that $k$ is odd (for even $k$, we just think of the polynomial as being of degree $k+1$, with the degree $(k+1)$ part being $0$). 

\vspace{10pt}

\def\compactify{\itemsep=0pt \topsep=0pt \partopsep=0pt \parsep=0pt}

\rule{0pt}{12pt}
\hrule height 0.8pt
\rule{0pt}{1pt}
\hrule height 0.4pt
\rule{0pt}{6pt}

\noindent \textbf{\AdvRand: Algorithm for Advantage over Average for degree $k$ polynomials}

\medskip

\noindent \textbf{Input:} a degree $k$-function $g$

\noindent \textbf{Output:} an assignment $x$

\begin{enumerate} \compactify
\item %Let $g=\sqrt{m} (f- \widehat{f}(\emptyset))$ and
Let $1 \le s \le \log_2 k$ be a scale such that the mass (i.e., sum of squares of coefficients) of the Fourier transform of $g$ on levels between $2^{s-1}$ and $2^s$ is at least $1/\log k$.
\item For every $i \in [n]$,  put $i$ in set $U$ with probability $2^{-s}$. For every $i \notin U$, set $x_i \in \{-1,1\}$ uniformly at random and let $y$ be the assignment restricted to the variables in $[n] \setminus U$.
% \AVnote{Included a case analysis if constant term is large, random assignment works.}
\item Let $\grest$ be the restriction obtained.
%If $\abs{\widehat{\grest}(\emptyset)} \ge t$, we return a random assignment $x \in \{-1,1\}^n$.
%Let $T \subseteq U$ be the bounded first-order fourier coefficients of $g$ i.e.
%\[T=\Big\{j \in U: \quad \abs{\gresth(\set{j})} \le (2e)^{2k} \cdot \sum_{S \cap U=\set{i}}\widehat{g}(S)^2 \Big\}.\]
%\item 
%For every $j \in U$, set $x_j =\text{sign}(\gresth(\emptyset))\cdot \text{sign}(\gresth(\set{j}))$.
For every $j \in U$, set $x_j =\text{sign}(\gresth(\set{j}))$.
\item Pick $r \in \set{0,1, \dots,k}$ uniformly at random, and let $\eta=\eta_r^{(k)} /2$. 
%\item For odd $k$, pick $r \in \set{0,1, \dots,k}$ uniformly at random, and let $\eta=\eta_r^{(k)} /2$. \\
%For even $k$ pick $r \in \set{0,1, \dots,k+1}$ uniformly at random, and let $\eta=\eta_r^{(k+1)} /2$.
\item For each coordinate $j \in U$, flip $x_j$ independently at random with probability $(1-\eta)/2$.
\item Output $x$.
\end{enumerate}

%%\rule{0pt}{1pt}
\hrule height 0.4pt
\rule{0pt}{1pt}
\hrule height 0.8pt
\rule{0pt}{12pt}

We now give the analysis of the algorithm, following~\cite{Dinuretal}. The second step of the algorithm performs a \emph{random restriction}, that ensures that $g_y$ has a lot of mass on the first-order Fourier coefficients. The key lemma (that follows from the proof of Lemma 1.3 and Lemma 4.1 in \cite{Dinuretal}) shows that we can find an assignment that obtains a large value for a polynomial with sufficient ``smeared'' mass on the first-order Fourier coefficients.

\begin{lemma}\label{lem:linear}
Suppose $g:\pmo^N\rightarrow \R$ has degree at most $k$, $t \ge 1$, and $\sum_{i \in [N]} \abs{\widehat{g}(\set{i})} \ge 2t(k+1)$. %$\forall i \in T ~ \abs{\widehat{g}(\set{i})} = O(\tfrac{1}{tk})$.
Then a randomized polynomial time algorithm
outputs a distribution %$\mathcal{D}$ 
over assignments $\bx \in \{-1,1\}^N$ such that
\[\Pr_{\bx}\left[|g(\bx)|\ge t\right] \ge \exp\left(-O(k)\right) .\]
\end{lemma}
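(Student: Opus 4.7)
The plan is to analyze the natural algorithm that greedily sets $x_j^{(0)} := \sgn(\wh{g}(\{j\}))$, picks $r \in \{0, \dots, k\}$ uniformly and sets $\eta = \eta_r^{(k)}/2$, then independently flips each bit with probability $(1-\eta)/2$. Writing the result as $\bx_j = \sigma_j \boldsymbol{b}_j$ where $\sigma_j = \sgn(\wh{g}(\{j\}))$ and the $\boldsymbol{b}_j \in \pmo$ are i.i.d.\ with $\E[\boldsymbol{b}_j] = \eta$, I would first compute the conditional expectation
\[
    P(\eta) \defeq \E_{\boldsymbol{b}}[g(\bx) \mid \eta] = \sum_S \wh{g}(S) \prod_{j \in S} \sigma_j \cdot \eta^{|S|},
\]
which is a polynomial of degree at most $k$ in $\eta$ whose degree-$1$ coefficient equals $\alpha_1 = \sum_j \wh{g}(\{j\}) \sigma_j = \sum_j |\wh{g}(\{j\})| \ge 2t(k+1)$.

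The next step is to show that at least one of the $k+1$ rescaled Chebyshev extrema $\eta_r^{(k)}/2$ gives a large $|P(\eta)|$. I would invoke the Chebyshev-interpolation identity (cf.\ the ``Lemma cheby'' used in the sketched generalization of \pref{sec:3xor} to higher odd $k$): for odd $k$ there exist reals $c_0, \dots, c_k$ with $\sum_j |c_j| \leq k$ such that $\sum_j c_j (\eta_j^{(k)})^i$ equals $1$ if $i=1$ and $0$ for $i \in \{0,2,3,\dots,k\}$. Applied to $Q(\eta) := P(\eta/2)$, whose linear coefficient is $\alpha_1/2$, this yields $\alpha_1/2 = \sum_j c_j Q(\eta_j^{(k)})$, so
\[
    \max_{0 \le r \le k}\,\bigl|P(\eta_r^{(k)}/2)\bigr| \;\ge\; \frac{|\alpha_1|/2}{k} \;\ge\; \frac{t(k+1)}{k} \;\ge\; t.
\]
Let $r^*$ achieve this maximum; since the algorithm chooses $r$ uniformly, $r=r^*$ with probability at least $1/(k+1)$. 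Even $k$ is handled by padding $g$ with a zero top-degree part, as remarked in the excerpt.

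Conditioning on $r = r^*$, the bias satisfies $\eta^* \in [-1/2,1/2]$, so each $\boldsymbol{b}_j$ takes values $\pm 1$ with probabilities in $[1/4, 3/4]$; the product distribution of $\boldsymbol{b}$ is $\Omega(1)$-balanced. The final ingredient is the $p$-biased analogue of \pref{fact:anticonc}: for any degree-$\le k$ polynomial $f$ on such a balanced product space, $\Pr[f(\boldsymbol{b}) \ge \E[f(\boldsymbol{b})]] \ge \exp(-O(k))$, with the hidden constant depending only on the balance bound. Applying this to $f(\boldsymbol{b}) = g(\sigma \cdot \boldsymbol{b})$ when $P(\eta^*) \ge t$, or to $-f$ when $P(\eta^*) \le -t$, gives $\Pr[\,|g(\bx)| \ge t \mid r=r^*] \ge \exp(-O(k))$; combining with the $1/(k+1)$ factor yields the claimed overall bound $\Pr[|g(\bx)| \ge t] \ge \exp(-O(k))$.

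The main obstacle is the biased hypercontractivity step, since \pref{fact:anticonc} as stated assumes uniform bits, whereas after conditioning on $r=r^*$ we work with an $\eta^*$-biased product measure. I would close this gap via biased $(2,q)$-hypercontractivity, whose constants depend only on the minimum single-bit probability $\min\bigl((1-\eta^*)/2,\,(1+\eta^*)/2\bigr) \ge 1/4$. This is precisely the motivation for the factor of $1/2$ in $\eta = \eta_r^{(k)}/2$: it bounds the bias away from $\pm 1$ so that the biased analogue of \pref{fact:anticonc} holds with a universal $\exp(-O(k))$ constant, uniformly in $r^*$.
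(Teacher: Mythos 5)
Your proposal is correct and takes essentially the same route as the paper's proof, which also follows DFKO's Lemma~1.3: fix $\sigma_j = \sgn(\wh{g}(\{j\}))$, scale by a random Chebyshev extremum $\eta_r^{(k)}/2$, use the Chebyshev extremal/interpolation identity to show some $r$ makes $|\E[g]| \ge t$, and finish with biased hypercontractivity (the paper invokes DFKO Lemma~2.5, you invoke the $p$-biased analogue of \pref{fact:anticonc}). The only cosmetic difference is that you make the coefficient-extraction identity and the $\sum_j|c_j| \le k$ bound explicit, whereas the paper defers to DFKO Corollary~2.8; the constants and logic are the same.
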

\noindent The algorithm proving \pref{lem:linear} corresponds to Steps (3-6) of the Algorithm \AdvRand.
\begin{proof}
We sketch the proof, highlighting the differences to Lemma 1.3 of \cite{Dinuretal}.
First we observe that by picking the assignment $x^*_i= \text{sign}(\widehat{g}(\set{i}))$, we can maximize the linear portion as
\[\sum_{i \in [N]} \widehat{g}(\set{i})  x^*_i = \sum_{i \in [N]} \abs{\widehat{g}(\set{i})}\ge 2t(k+1).\]
%Further, by setting the rest of ${\bx}$ values ($x_i$ for  $i \in [N] \setminus T$) to random in $\pmo$, due to symmetry we have that the linear part satisfies
%\begin{equation}
%\Pr_{\bx} \Big[\abs{\widehat{g}(\emptyset)+ \sum_{i \in [N]} \widehat{g}(\set{i}) {\bx_i}} \ge  2t(k+1) \Big] \ge \frac{1}{2} .\label{eq:linear1}
%\end{equation}
%Let $x^* \in \pmo^N$ be an assignment that satisfies the event in equation \eqref{eq:linear1}.
From this point on, we follow the proof of Lemma 1.3 in \cite{Dinuretal} with their initial point $x_0$ being set to $x^*$.
Let $\bz \leftarrow_{\eta} \pmo^N$ be a random string generated by independently setting each coordinate ${\bz_j} =-1$ with probability $(1-\eta)/2$ (as in step 5 of the algorithm), and let
\[(T_\eta g)(x^*)= \E_{\bz \leftarrow_{\eta}\pmo^n} [g(x^* \cdot \bz)].\]
Lemma 1.3 of \cite{Dinuretal}, by considering 
$(T_\eta g)(x^*)$ as a polynomial in $\eta$ and using the extremal properties of Chebyshev polynomials (Corollary 2.8 in \cite{Dinuretal}),  shows that there exists 
$\eta \in \set{\tfrac{\eta^{(k)}_0}{2},\tfrac{\eta^{(k)}_1}{2}, \dots,\tfrac{\eta^{(k)}_k}{2}}$ such that
\begin{equation} \label{eq:lemcheby}
\E_{\bz \leftarrow_{\eta}\pmo^n} \Big[\abs{g(x^* \cdot \bz)} \Big] \ge 2t(k+1)\cdot\frac{1}{(2k+2)} = t.
\end{equation}

Consider $g(x^* \cdot \bz)$ as a polynomial in $\bz$, with degree at most $k$. As in \cite{Dinuretal}, we will now use the hypercontractivity to give a lower bound on the probability (over random $\bz$) that $\abs{g(x^* \cdot \bz)}$
exceeds the expectation. Note that our choice of $\eta \in [-\tfrac{1}{2}, \tfrac{1}{2}]$ and hence the bias is in the interval $[\tfrac{1}{4} , \tfrac{3}{4}]$. Using Lemma 2.5 in \cite{Dinuretal} (the analogue of \pref{fact:anticonc} for biased measures),  it follows that 
\[ \Pr_{\bz} \Big[ \abs{g(x^* \cdot \bz)}  \ge t\Big] \ge \tfrac{1}{4}  \exp(-2k). \]
Hence when $\bx$ is picked according to $\mathcal{D}$, %with probability $\tfrac{1}{2}$ equation \pref{eq:linear1} holds, then 
with probability at least $1/(k+1)$ the algorithm chooses an $\eta$ such that \pref{eq:lemcheby} holds, and then a random $\bz$ succeeds with probability $\exp(-O(k))$, thereby giving the required success probability.
\end{proof}

We now sketch the proof of the constructive version of Theorem 3 in \cite{Dinuretal}, highlighting why algorithm \AdvRand works.
\begin{proof}[Proof of \pref{thm:constructive}]

The scale $s$ is chosen such that the Fourier coefficients of $g$ of order $[2^{s-1}, 2^{s}]$ have mass at least $1/\log k$. The algorithm picks set $U$ randomly by choosing each variable with probability $2^{-s}$, and $\grest$ is the restriction of $g$ to the coordinates in $U$ obtained by setting the other variables randomly to $\by \in \{-1,1\}^{[N]\setminus U}$.

Let $\gamma_i= \sum_{S \cap U=\set{i}} \widehat{g}(S)^2$.
Fixing $U$ and $y$, let the indices $T=\set{i \in U: \widehat{g}_y (\set{i})^2 \le (2e)^{2k} \gamma_i }$. The proof of Theorem 3 in \cite{Dinuretal} shows that a constant fraction of the first order Fourier coefficients are large; in particular after Steps 1 and 2 of the algorithm,
\begin{equation}\label{eq:sumhypercontractivity}
\Pr_{U,\by} \Big[ \sum_{i \in T} \widehat{g}_{\by}(\set{i})^2 \ge \frac{1}{100 \log k} \Big] \ge \exp(-O(k)) \; .
\end{equation}
%When the above event is satisfied, $\sum_{i \in T} \widehat{g}_{\by}(\set{i})^2 \ge \frac{1}{100 \log k}$.  
Further, for $i \in T$, we have $\abs{\widehat{g}_y (\set{i})}\le (2e)^k \sqrt{\gamma_i} \le (2e)^k \sqrt{\Infl_i(g)}$. Hence, when the above event in \pref{eq:sumhypercontractivity} is satisfied we have 
\begin{align*}
\sum_{i \in U} \abs{\widehat{g}_y (\set{i})} &\ge \frac{1}{\max_{i \in T} \abs{\widehat{g}_y (\set{i})}}\cdot \sum_{i \in T} \widehat{g}_y (\set{i})^2 \\
& \ge \frac{1}{(2e)^k \sqrt{\max_i \Infl_i (g)}} \cdot \frac{1}{100 \log k} \ge 2t(k+1). 
\end{align*}
%When the above event is satisfied, we can apply \pref{lem:linear} with the function
%\[g'=\tfrac{\grest}{\sqrt{\sum_{i \in T} \gresth(\set{i})^2}}.\]
%To check that the conditions of \pref{lem:linear} apply, note that $\gamma_i \le \sum_{S \ni i} \widehat{g}(S)^2$ and %\Onote{was:$g' \le  O(\log k) ~ \grest$}
%$g' \le  O(\sqrt{\log k}) ~ \grest$. Hence,%\Onote{added square root here too}
%\[\max_{i \in T} \widehat{g'}(\set{i}) \le 100 \sqrt{\log k} \cdot (2e)^k \max_{i \in T} \sqrt{\gamma_i} \le \frac{1}{2t(k+2)}.\]
%Hence, applying \pref{lem:linear}, %since $\norm{g'}_\infty \le \norm{g}_\infty$,
Hence, applying \pref{lem:linear} with $g_y$ we get that
\begin{equation}
\Pr_{\bx \in \mathcal{D}} \Big[ \abs{g(x)} \ge t \Big] \ge \exp(-O(k)),
\end{equation}
where $\mathcal{D}$ is the distribution over assignments $x$ output by the algorithm.
Repeating this algorithm $\exp(O(k))$ times, we get the required high probability of success.
\end{proof}

%\AVnote{Included a remark about the more general statement. I am not sure if we want to put this statement as the main theorem statement, since we were planning to show how it's exactly DFKO.}
%\begin{remark}
%The proof of Theorem~\ref{thm:constructive} and Lemma~\ref{lem:linear} can be modified to give a slightly more general statement. For any polynomial $g$ of degree at most $k$, the algorithm runs in time $\poly(n,m,2^k)$ and finds with high probability an assignment $x \in \{-1,1\}^n$ such that $g(x) \ge \exp(-O(k)) \sqrt{\Var[g]} \cdot \sum_{i \in n} \sqrt{\Infl_i(g)} $.
%\end{remark}
%\ROnote{does it?  I didn't immediately see it} \AVnote{Yeah, it basically corresponds to what we can guarantee about the $\ell_1$ mass of the linear part (in Lemma~\ref{lem:linear}). I wanted to show the correspondence to $\sum_i \sqrt{deg_i}$: should I restate it that way?}
%\Onote{the assumption on the variance being 1 seems redundant: the inequality we claim is homogenous of degree 1 in $g$ so one can always scale g to have variance 1}\AVnote{Changed.}

\section{Triangle-free instances}
%%% Local Variables:
%%% mode: latex
%%% TeX-master: t
%%% End:
\label{sec:trianglefree}

In this section we present the proof of \pref{thm:maingencsp}, which gives an efficient algorithm for beating the random assignment in the case of arbitrary triangle-free CSPs (recall \pref{def:trianglefree}).
We now restate \pref{thm:maingencsp} and give its proof. As in the proof of \pref{thm:constructive}, we can easily move from an expectation guarantee to a high probability guarantee by first applying Markov's inequality, and then repeating the algorithm $\exp(k) \poly(n,m)$ times; hence we will prove the expectation guarantee here. 
\begin{theorem}
  There is a $\poly(m,n,\exp(k))$-time randomized algorithm with the following guarantee.  Let the input be a triangle-free instance over $n$ Boolean variables, with $m$ arbitrary constraints each of arity between~$2$ and~$k$.   Assume  that each variable participates in at most~$\degree$ constraints.
  Let the associated polynomial be $\clf(x)$.  Then the algorithm outputs an assignment~$\bx \in \{\pm 1\}^n$ with
  \[
    \E[\clf(\bx)] \geq \exp(-O(k)) \cdot \sum_{i=1}^n \frac{\sqrt{\deg(i)}}{m} \geq \exp(-O(k)) \cdot \frac{1}{\sqrt{\degree}}.
  \]
\end{theorem}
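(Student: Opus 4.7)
The plan is to execute the random-partition / threshold-greedy strategy sketched in the techniques overview. I would put each variable independently into a ``greedy'' set $G$ with probability $\tfrac12$ and otherwise into the ``fixed'' set $F$, sample $\bx_F \in \{\pm 1\}^F$ uniformly, and then for each $i \in G$ set $\bx_i = \sgn(A_i(\bx_F) - \theta_i)$ (breaking ties with an independent fair coin), where
\[
A_i(x_F) \;=\; \sum_{\ell:\, S_\ell \cap G = \{i\}} (\partial_i P_\ell)(x_{F \cap S_\ell})
\]
is the degree-$(k-1)$ polynomial measuring the total gain on ``active-at-$i$'' constraints from flipping $x_i$ from $-1$ to $+1$, and $\theta_i$ is chosen as the median of $A_i(\bx_F)$ so that $\bx_i$ is exactly unbiased. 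The median is easily estimated to arbitrary accuracy by Monte Carlo sampling and the tiny resulting bias would be absorbed into the final $\exp(-O(k))$ constant, so I will pretend $\theta_i$ is computed exactly.

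Decomposing $\E[\clf(\bx)]$ constraint-by-constraint, constraints $\ell$ with $S_\ell \subseteq F$ contribute $0$ because $\bx_{S_\ell}$ is uniform. Inactive constraints with $|S_\ell \cap G| \geq 2$ also contribute $0$, and this is where triangle-freeness is used: for any two $i, j \in S_\ell \cap G$ and any constraints $\ell'$ active at $i$ and $\ell''$ active at $j$, ``no overlapping constraints'' forces $S_{\ell'} \cap S_\ell = \{i\}$ and $S_{\ell''} \cap S_\ell = \{j\}$, while ``no hyper-triangles'' forces $S_{\ell'} \cap S_{\ell''} = \emptyset$. Hence the greedy assignments $\bx_i$ for $i \in S_\ell \cap G$ depend on pairwise disjoint portions of $\bx_F$, all disjoint from $F \cap S_\ell$; combined with each $\bx_i$ being unbiased, this makes $\bx_{S_\ell}$ uniform on $\{\pm 1\}^{|S_\ell|}$, so $\E[\ol{P}_\ell(\bx_{S_\ell})] = 0$.

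For an active constraint with unique greedy variable $i$, writing $\ol{P}_\ell = x_i (\partial_i P_\ell) + r_\ell$ with $\E[r_\ell] = 0$, summing over all $\ell$ active at $i$, and using that $\E[\sgn(A_i - \theta_i)] = 0$ gives
\[
\sum_{\ell \text{ active at } i} \E[\ol{P}_\ell] \;=\; \E\bigl[\,\lvert A_i - \theta_i\rvert\,\bigr] \;\geq\; \exp(-O(k))\,\sqrt{\Var[A_i]}
\]
by \pref{fact:anticonc} applied to the degree-$(k-1)$ polynomial $A_i - \theta_i$. Triangle-freeness makes the summands of $A_i$ functions of disjoint $F$-variables, so the variances add, and \pref{lem:highdegree} gives each summand variance $\Omega(2^{-k})$; hence $\sqrt{\Var[A_i]} \geq \exp(-O(k)) \sqrt{a_i}$, where $a_i$ counts constraints active at $i$. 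Conditional on $i \in G$ we have $\E[a_i] \geq \deg(i)\cdot 2^{-(k-1)}$, and since $a_i \leq \deg(i)$ deterministically, the elementary bound $\sqrt{a_i} \geq a_i/\sqrt{\deg(i)}$ yields $\E[\sqrt{a_i} \mid i \in G] \geq \exp(-O(k))\sqrt{\deg(i)}$. Multiplying by $\Pr[i \in G] = \tfrac12$, summing over $i$, and dividing by $m$ delivers $\E[\clf(\bx)] \geq \exp(-O(k)) \sum_i \sqrt{\deg(i)}/m$, and then $\sum_i \sqrt{\deg(i)} \geq \sum_i \deg(i)/\sqrt{\degree} \geq 2m/\sqrt{\degree}$ (using that every constraint has arity at least~$2$) gives the final bound. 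I expect the main obstacle to be the disciplined bookkeeping in the inactive case: one must trace carefully which part of $\bx_F$ each $\bx_i$ sees, and the ``no hyper-triangles'' axiom is indispensable for ruling out a three-way coupling between the greedy decisions made at two distinct greedy variables sharing an inactive constraint.
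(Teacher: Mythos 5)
Your proposal is correct and follows the same random-partition / threshold-greedy argument as the paper, down to the median-centered sign rule, the use of both halves of triangle-freeness (``no overlapping'' for independence of the active summands, plus ``no hyper-triangles'' for decoupling greedy variables within an inactive constraint), and the application of \pref{fact:anticonc} and \pref{lem:highdegree}. The one spot where you diverge, slightly to your advantage, is bounding $\E[\sqrt{a_i} \mid i \in G]$: the paper appeals to $\E[\sqrt{\text{Binomial}(d,p)}] \geq \Omega(\min(\sqrt{dp},dp))$ or to anticoncentration of the sum of indicators (which in turn uses independence of the ``goes active'' events), while your deterministic inequality $\sqrt{a_i} \geq a_i/\sqrt{\deg(i)}$ combined with $\E[a_i \mid i \in G] \geq 2^{-(k-1)}\deg(i)$ is simpler and sidesteps that independence consideration entirely.
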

\begin{proof}
    Let $(F,G)$ be a partition of~$[n]$, with $F$ standing for ``Fixed'' and $G$ standing for ``Greedy''. Eventually the algorithm will choose the partition randomly, but for now we treat it as fixed.  We will write the two parts of the algorithm's random assignment $\bx$ as $(\bx_F,\bx_G)$.  The bits $\bx_F$ will first be chosen independently and uniformly at random.  Then the bits~$\bx_G$ will be chosen in a careful way which will make them uniformly random, but not completely independent.

    To make this more precise, define a constraint $(P_\ell, S_\ell)$ to be \emph{active} if its scope $S_\ell$ contains exactly one coordinate from~$G$.  Let us partition these active constraints into groups
    \[
        N_j = \{ \ell : (P_\ell,S_\ell) \text{ is active and } S_\ell \ni j\}, \quad j \in G.
    \]
    For each coordinate $j \in G$, we'll define $A_j \subset F$ to be the union of all active scopes involving~$j$ (but excluding $j$ itself); i.e.,
%    \[
%        A_j = \bigcup \{S_\ell \setminus \{j\} : S_\ell \text{ is active and } S_\ell \ni j\}.
%    \]
 \[
        A_j = \bigcup \{S_\ell \setminus \{j\} : \ell \in N_j \}.
    \]
      This set $A_j$ may be empty.  Our algorithm's choice of $\bx_G$ will have the following property:
    \begin{center}
        \emph{$\forall j \in G$, the distribution of $\bx_j$ is uniformly random, and it depends only on $(\bx_{i} : i \in A_j)$.} \ \  $(\dagger)$
    \end{center}
    From property~$(\dagger)$ we may derive:
    \begin{subclaim}       \label{claim:indep}
       For every \emph{inactive} constraint $(P_\ell,S_\ell)$, the random assignment bits $\bx_{S_\ell}$ are uniform and \emph{independent}.
    \end{subclaim}
    \begin{proof}[Proof of Claim.]
        First consider the coordinates $j \in S_\ell \cap G$. By the property~$(\dagger)$, each such~$\bx_j$ depends only on $(\bx_{i} : i \in A_j)$; further, these sets $A_j$ are disjoint precisely because of the ``no hyper-triangles'' part of triangle-freeness.  Thus indeed the bits $(\bx_j : j \in S_\ell \cap G)$ are uniform and mutually independent.  The remaining coordinates $S_\ell \cap F$ are also disjoint from all these $(A_j)_{j \in S_\ell \cap G}$, by the ``no overlapping constraints'' part of the triangle-free property.  Thus the remaining bits  $(\bx_i : i \in S_\ell \cap F)$ are uniform, independent, and independent of the bits $(\bx_j : j \in S_\ell \cap G)$, completing the proof of the claim.
    \end{proof}
    An immediate corollary of the claim is that all inactive constraints, $\ol{P}_\ell$ contribute nothing, in expectation, to $\E[\clf(\bx)]$. Thus it suffices to consider the contribution of the active constraints.  
    %Let us partition these active constraints into groups
%		\Onote{it might make more sense to define this before $A_j$ (and define $A_j$ using $N_j$)} \AVnote{done.}
%    \[
%        N_j = \{ \ell : S_\ell \text{ is active and } S_\ell \ni j\}, \quad j \in G.
%    \]
    Our main goal will be to show that the bits $\bx_G$ can be chosen in such a way that
    \begin{equation}        \label{eqn:active-contribution}
        \forall j \in G \quad \E\Bigl[\littlesum_{\ell \in N_j} \ol{P}_\ell(\bx_{S_\ell})\Bigr] \geq \exp(-O(k)) \cdot \sqrt{|N_j|}
    \end{equation}
    and hence
    \begin{equation}        \label{eqn:active-contribution-cor}
        \E[\clf(\bx)] \geq \frac{1}{m} \cdot \exp(-O(k)) \cdot \sum_{j \in G} \sqrt{|N_j|}.
    \end{equation}
    Given~\eqref{eqn:active-contribution-cor} it will be easy to complete the proof of the theorem by choosing the partition $(F,G)$ randomly.

    So towards showing~\eqref{eqn:active-contribution}, fix any $j \in G$. For each $\ell \in N_j$ we can write $\ol{P}_\ell(x_{S_\ell}) = x_j Q_\ell(x_{S_\ell \setminus \{j\}}) + R_\ell(x_{S_\ell \setminus \{j\}})$, where $Q_\ell = \partial_j \ol{P}_\ell = \partial_j P_\ell$.  Since the bits $\bx_i$ for $i \in S_\ell \setminus \{j\} \subset F$ are chosen uniformly and independently, the expected contribution to~\eqref{eqn:active-contribution} from the $R_{\ell}$ polynomials is~$0$.  Thus we just need to establish
    \begin{equation} \label{eqn:active-contrib2}
        \E\Bigl[\bx_{j} \cdot \littlesum_{\ell \in N_j} \bQ_\ell \Bigr] \geq \exp(-O(k)) \cdot \sqrt{|N_j|}, \quad \text{ where } \bQ_\ell \defeq Q_\ell(\bx_{S_\ell \setminus \{j\}}).
    \end{equation}
    We now finally describe how the algorithm chooses the random bit $\bx_j$.  Naturally, we will choose it to be $+1$ when $\littlesum_{\ell \in N_j} \bQ_\ell$ is ``large'' and $-1$ otherwise. Doing this satisfies the second aspect of property~$(\dagger)$, that $\bx_j$ should depend only on $(\bx_i : i \in A_j)$.  To satisfy the first aspect of property~$(\dagger)$, that $\bx_j$ is equally likely $\pm 1$, we are essentially forced to define
    \begin{equation}        \label{eqn:triangle-x-def}
        \bx_j = \sgn\Bigl(\littlesum_{\ell \in N_j} \bQ_\ell - \theta_j\Bigr),
    \end{equation}
    where $\theta_j$ is defined to be a \emph{median} of the random variable $\littlesum_{\ell \in N_j} \bQ_\ell$.

    (Actually, we have to be a little careful about this definition.  For one thing, if the median $\theta_j$ is sometimes achieved by the random variable, we would have to carefully define $\sgn(0)$ to be sometimes $+1$ and sometimes $-1$ so that $\bx_j$ is equally likely $\pm 1$.  For another thing, we are assuming here that the algorithm can efficiently \emph{compute} the medians $\theta_j$.  We will describe how to handle these issues in a technical remark after the proof.)

    Having described the definition~\eqref{eqn:triangle-x-def} of $\bx_j$ satisfying property~$(\dagger)$, it remains to verify the inequality~\eqref{eqn:active-contrib2}. Notice that by the ``no overlapping constraints'' aspect of triangle-freeness, the random variables $\bQ_\ell$ are actually mutually independent.  Further, \pref{lem:highdegree} implies that each has variance $\Omega(2^{-k})$; hence the variance of $\bQ \defeq \sum_{\ell \in N_j} \bQ_\ell$ is $\exp(-O(k)) \cdot |N_j|$.  
		\Onote{consider definining $Q$ earlier}
		Thus inequality~\eqref{eqn:active-contrib2} is equivalent to
    \[
        \E[\sgn(\bQ - \theta_j) \bQ] \geq \exp(-O(k)) \cdot \stddev[\bQ] = \exp(-O(k)) \cdot \stddev[\bQ - \theta_j].
    \]
    Now
    \begin{equation}    \label{eqn:what-if-not-0}
        \E[\sgn(\bQ - \theta_j) \bQ] = \E[\sgn(\bQ - \theta_j) (\bQ-\theta_j + \theta_j)] = \E[\left|\bQ - \theta_j\right|] + \E[\bx_j \cdot \theta_j].
    \end{equation}
    We have $\E[\bx_j \cdot \theta_j] = 0$ since $\E[\bx_j] = 0$.  And as for $\E[\left|\bQ - \theta_j\right|]$, it is indeed at least~$\exp(-O(k)) \cdot \stddev[\bQ]$ by \pref{fact:anticonc}, since~$\bQ$ is a degree-$(k-1)$ function of uniform and independent random bits. \ROnote{I know this is overkill because $\bQ$ is ``effectively degree-$1$'', but anyway we already have this Fact} Thus we have finally established~\eqref{eqn:active-contribution}, and therefore~\eqref{eqn:active-contribution-cor}.

    To conclude, we analyze what happens when the algorithm initially chooses a uniformly \emph{random} partition $(\bF,\bG)$ of $[n]$.  In light of~\eqref{eqn:active-contribution-cor}, it suffices to show that for each $i \in [n]$ we have
    \begin{equation} \label{eqn:finish-triangle-free}
        \E\left[{\mathbf 1}[i \in \bG] \cdot \sqrt{|\bN_i|}\right] \geq \exp(-O(k)) \cdot \sqrt{\deg(i)}.
    \end{equation}
    We have $\Pr[i \in \bG] = \frac12$; conditioning on this event, let us consider the random variable~$|\bN_i|$; i.e., the number of active constraints involving variable~$x_i$. A constraint scope $S_\ell$ containing~$i$ becomes active if and only if all the other indices in $S_\ell$ go into~$\bF$, an event that occurs with probability  $2^{-k+1}$ (at least).  Furthermore, these events are independent across the scopes containing~$i$ because of the ``no overlapping constraints'' property of triangle-freeness.  Thus (conditioned on $i \in \bG$), each random variable $|\bN_i|$ is the sum $\bA_1 + \cdots + \bA_{\deg(i)}$ independent indicator random variables, each with expectation at least $2^{-k+1}$.  Thus we indeed have $\E[\sqrt{|\bN_i|}] \geq \exp(-O(k)) \sqrt{\deg(i)}$ as needed to complete the proof of~\eqref{eqn:finish-triangle-free}. This follows from the well known fact that $\E[\sqrt{\text{Binomial}(d,p)}] \geq \Omega(\min(\sqrt{dp}, dp))$.  (Alternatively, this follows from the fact that $\bA_1 + \cdots + \bA_{d_i}$ is at least its expectation $d_i 2^{-k+1}$ with probability at least $\exp(-O(k))$, by \pref{fact:anticonc}.  Here we would use that the $\bA_j$'s are degree-$(k-1)$ functions of independent random bits defining $(\bF,\bG)$).  The proof is complete.
\end{proof}

\begin{remark}  Regarding the issue of algorithmically obtaining the medians in the above proof: In fact, we claim it is unnecessary for the algorithm to compute the median $\theta_j$ of each $\bQ_j$ precisely.  Instead, our algorithm will (with high probability) compute a number~$\wt{\theta}_j$ and a probabilistic way of defining $\sgn(0) \in \{\pm 1\}$ such that, when $\bx_j$ is defined to be $\sgn(\bQ - \wt{\theta}_j)$, we have $\left|\E[\bx_j]\right| \leq \delta$, where $\delta = 1/\poly(m,n,\exp(k))$ is sufficiently small.  First, let us briefly say why this is sufficient.  The above proof relied on $\E[\bx_j] = 0$ in two places.  One place was in the last term of~\eqref{eqn:what-if-not-0}, where we used $\E[\bx_j \cdot \theta_j] = 0$.  Now in the approximate case, we'll have $|\E[\bx_j \cdot \wt{\theta}_j]| \leq \delta m$, and by taking $\delta$ appropriately small this will contribute negligibly to the overall theorem.  The other place that $\E[\bx_j] = 0$ was used was in deducing from \pref{claim:indep}, that the inactive constraints contributed nothing to the algorithm's expected value.  When we merely have $\left|\E[\bx_j]\right| \leq \delta$ (but still have the independence used in the claim), it's easy to see from Fourier considerations that each inactive constraint still contributes at most~$2^k \delta$ to the overall expectation, and again this is negligible for the theorem as a whole if $\delta = 1/\poly(m,n,\exp(k))$ is sufficiently small.  Finally, it is not hard to show that the algorithm can compute an appropriate $\wt{\theta}_j$ and probabilistic definition of $\sgn(0)$ in $\poly(m,n,\exp(k))$ time (with high probability), just by sampling to find a good approximate median $\wt{\theta}_j$ and then also estimating $\Pr[\bQ_j = \wt{\theta}_j]$ to handle the definition of $\sgn(0)$.
\end{remark}
\AVnote{I think one can do a dynamic program to calculate this in time $2^k \poly(n)$, as Oded suggested. Ignoring this for now. }
\section*{Acknowledgments}
We thank Scott Aaronson for bringing the paper of Farhi et al.~\cite{FGG14} to (some of) the authors' attention. RO, DW, and JW were supported by NSF grants CCF-0747250 and CCF-1116594.  DW was also supported by the NSF Graduate Research Fellowship Program under grant DGE-1252522; JW was also supported by a Simons Graduate Fellowship. OR, DS, and AV acknowledge the support of the Simons Collaboration on Algorithms and Geometry. OR was also supported by NSF grant CCF-1320188. DS was also supported by a Sloan fellowship, a Microsoft Research Faculty Fellowship, and by the NSF. Any opinions, findings, and conclusions or recommendations expressed in this material are those of the authors and do not necessarily reflect the views of the NSF.

\addreferencesection
\bibliographystyle{amsalpha}
\bibliography{bib/mr,bib/dblp,bib/scholar,bib/lowdegcsp}

%\appendix

%\input{content/appendix}

\end{document}